\documentclass[superscriptaddress,letterpaper]{revtex4}

\usepackage{dsfont,amsthm,amssymb,amsmath}
\usepackage{graphicx,color,tikz}
\usepackage{hyperref}
\usepackage{mathtools}
\usepackage{float}
\usepackage{upgreek, bbold, float, caption}
\usepackage[boxed]{algorithm2e}
\usetikzlibrary{plotmarks}
\begin{filecontents}{eps-mu1.data}
1.0 0.0
3.0 2.50470516095747
5.0 5.00632866711204
\end{filecontents}

\bibliographystyle{unsrt}

\begin{document}

\title{Nonlocal games, synchronous correlations, and Bell inequalities}
\author{Nishant Rodrigues}
\affiliation{Department of Computer Science, University of Maryland, College Park, MD}
\affiliation{Joint Center for Quantum Information and Computer Science, College Park, MD}
\author{Brad Lackey}
\affiliation{Department of Computer Science, University of Maryland, College Park, MD}
\affiliation{Department of Mathematics, University of Maryland, College Park, MD}
\affiliation{Institute for Advanced Computer Studies, University of Maryland, College Park, MD}
\affiliation{Quantum Systems Group, Microsoft Quantum, Redmond, WA}

\date{\today}

\newtheorem{theorem}{Theorem}
\newtheorem{lemma}[theorem]{Lemma}
\newtheorem{proposition}[theorem]{Proposition}
\newtheorem{corollary}[theorem]{Corollary}
\newtheorem{conjecture}{Conjecture}
\newtheorem*{conjecture*}{Conjecture}

\theoremstyle{definition}
\newtheorem{definition}[theorem]{Definition}
\newtheorem{remark}{Remark}

\newcounter{example}
\newenvironment{example}{\refstepcounter{example}\par\medskip\noindent\textbf{Example~\theexample.}}{\par\hfill$\triangle$\par\medskip}
\newenvironment{example*}{\par\medskip\noindent\textbf{Example.}}{\par\hfill$\triangle$\par\medskip}

\newcommand{\bra}[1]{\ensuremath \langle{#1}|}%
\newcommand{\ket}[1]{{\ensuremath |{#1}\rangle}}%
\newcommand{\bracket}[2]{\ensuremath \langle{#1}|{#2}\rangle}%
\newcommand{\h}[1]{\ensuremath \mathfrak{#1}}%
\newcommand{\indicator}[1]{\ensuremath \openone_{\{#1\}}}%
\newcommand{\id}{\ensuremath \mathrm{id}}%
\newcommand{\tr}{\ensuremath{\mathrm{tr}}}
\newcommand{\ip}[2]{\ensuremath \langle{#1}|{#2}\rangle}%
\newcommand{\category}[1]{\ensuremath{\mathsf{#1}}}
\newcommand{\Hom}{\mathrm{Hom}}

\newcommand{\covec}[1]{\ensuremath \undertilde{#1}}

\newcommand{\bcl}[1]{{\color{orange}\textbf{(Brad: #1)}}}
\newcommand{\ngr}[1]{{\color{magenta}\textbf{(Nishant: #1)}}}

\begin{abstract}
Nonlocal games with synchronous correlations are a natural generalization of functions between two finite sets. In this work we examine analogues of Bell's inequalities for such correlations, and derive a synchronous device-independent quantum key distribution protocol. This protocol has the advantage of symmetry between the two users and self-testing while generating shared secret key without requiring a preshared secret. We show that, unlike general correlations and the CHSH inequality, there can be no quantum Bell violation among synchronous correlations with two measurement settings. However we exhibit explicit analogues of Bell's inequalities for synchronous correlations with three measurement settings and two outputs, provide an analogue of Tsirl'son's bound in this setting, and prove existence and rigidity of quantum correlations that saturate this bound. We conclude by posing a security assumption that bypasses the locality, or causality, loophole and examine the protocol's robustness against measurement error and depolarization noise.
\end{abstract}

\maketitle

\begin{section}{Introduction}

In this work we describe a device-independent quantum key distribution protocol that is symmetric between Alice and Bob using the notion of a \emph{synchronous} correlation. Namely we study two player nonlocal games, which are characterized by a conditional probability distribution $p(y_A,y_B \:|\: x_A,x_B)$ where $x_A,x_B \in X$ and $y_A,y_B \in Y$. This is not the most general case we could consider, where each of Alice's and Bob's inputs and outputs could come from different sets. However, as our goal is to have symmetry under exchange of Alice and Bob and so such generality is unneeded.

At a high-level each round the protocol operates as follows:

\begin{center}
\fbox{
    \begin{minipage}{0.8\textwidth}
        \begin{enumerate}
            \item Alice and Bob share an EPR pair $\frac{1}{\sqrt{2}}(\ket{00} + \ket{11})$.
            \item Independently, Alice and Bob randomly select from one of three fixed measurement bases to measure his or her half of the EPR pair.
            \item After measurement, Alice and Bob exchange the basis selection they made.
            \item If they selected the same basis, they store the output of their measurements as a shared secret value. If they chose differing bases, they exchange their measurement outcomes and store these for later performing a self-test of the device.
        \end{enumerate}
    \end{minipage}
}
\end{center}

Note that Alice and Bob are selecting from the same set of measurement bases and their roles are completely exchangeable (compare to \cite{barrett2005nonlocal, vazirani2014fully}). Moreover, each round of the protocol uses these same set of bases and so there is no need for a preshared secret for heralding testing, or game, rounds \cite{pironio2010random, coudron2013robust, miller2016robust}. Yet, when Alice and Bob choose the same bases the protocol assumes that their measurement outcomes are perfectly correlated. This requires that the device be modeled as a ``synchronous'' correlation as defined here.

\begin{definition}
A correlation is \emph{synchronous} if
\begin{equation}\label{eqn:syncrhonous:definition}
   p(y_A,y_B\:|\:x,x) = 0 \text{ if $x\in X$ and $y_A\not= y_B$ $\in Y$.} 
\end{equation}
A correlation is \emph{symmetric} if $p(y_A,y_B\:|\:x_A,x_B) = p(y_B,y_A\:|\:x_B,x_A)$.
\end{definition}

As is traditional with schemes such the CHSH or Magic Square games, or their generalizations \cite{mermin1990simple, peres1990incompatible, cleve2004consequences, arkhipov2012extending, coladangelo2017robust}, the analysis relies on understanding the space of local (or ``classical'' or ``hidden variables'') correlations. Formally these are defined as follows.

\begin{definition}
    A \emph{local hidden variables strategy}, or simply \emph{classical correlation}, is a correlation of the form
    \begin{equation*}
        p(y_A,y_B\:|\:x_A,x_B) = \sum_{\omega\in\Omega} \mu(\omega) p_A(y_A\:|\: x_A,\omega) p_B(y_B\:|\: x_B,\omega)
    \end{equation*}
    for some finite set $\Omega$ and probability distribution $\mu$. A \emph{quantum correlation} is a correlation that takes the form
    \begin{equation*}
        p(y_A,y_B\:|\:x_A,x_B) = \tr(\rho(E^{x_A}_{y_A}\otimes F^{x_B}_{y_B}))
    \end{equation*}
    where $\rho$ is a density operator on the Hilbert space $\h{H}_A\otimes\h{H}_B$, and for each $x\in X$ we have $\{E^x_y\}_{y\in Y}$ and $\{F^x_y\}_{y\in Y}$ are POVMs on $\h{H}_A$ and $\h{H}_B$ respectively. 
\end{definition}

Local, or ``classical'' or ``hidden variables,'' synchronous correlations arise from simple strategies. Alice and Bob initially agree on a function $f:X \to Y$ and then when asked questions $x_A$ (resp. $x_B$) answer with $y_A = f(x_A)$ (resp. $y_B = f(x_B)$). In fact, these functions $f:X\to Y$ used as above are precisely the extreme points of the classical synchronous correlations. Consequently, classical synchronous correlations are always symmetric. An analogous result for quantum correlations, states that every quantum synchronous correlation is a convex combination of tracial states \cite{paulsen2016estimating}. So the symmetry of our protocol above was not circumstance: every quantum synchronous correlations must be symmetric. Formally we state these results here for later use; proofs of them can be found in the literature in various contexts \cite{cameron2007quantum, atserias2016quantum, manvcinska2016quantum, paulsen2016estimating}. For completeness, the reader can find direct proofs in Appendix
\ref{appendix:extreme-points}.

\begin{theorem}\label{theorem:classical:characterization}
The set of synchronous classical correlations with input $X$ and output $Y$ is bijective to the collection of probability distributions on the set of functions $X \to Y$. Given such a probability distribution, the associated strategy is: Alice and Bob sample a function $f:X\to Y$ according the specified distribution, and upon receiving $x_A,x_B$ they output $y_A = f(x_A)$ and $y_B = f(x_B)$.
\end{theorem}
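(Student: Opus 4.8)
The statement mostly repackages the computation carried out just above it, so the plan is to make that repackaging precise, then check the converse direction and that the two constructions are mutually inverse. Starting from a synchronous classical correlation written as in (\ref{eqn:classical:definition}) with shared randomness $(\Omega,\mu)$ and local responses $p_A,p_B$, the argument above produces for every $\omega\in\Omega$ a function $f_\omega\in Y^X$ with $p_A(y\:|\:x,\omega)=p_B(y\:|\:x,\omega)=\indicator{y=f_\omega(x)}$, and hence a map $\Phi\colon\Omega\to Y^X$. First I would push $\mu$ forward along $\Phi$, defining $\nu(f)=\sum_{\omega\in\Phi^{-1}(f)}\mu(\omega)$, which is a probability distribution on $Y^X$. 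Substituting the now-deterministic $p_A,p_B$ back into (\ref{eqn:classical:definition}) and regrouping the sum over $\omega$ according to the value of $f_\omega$ yields
\begin{equation*}
p(y_A,y_B\:|\:x_A,x_B)=\sum_{f\in Y^X}\nu(f)\,\indicator{y_A=f(x_A)}\,\indicator{y_B=f(x_B)},
\end{equation*}
which is exactly the correlation of the sampling strategy described in the theorem; this is the forward map.

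For the converse, given any probability distribution $\nu$ on $Y^X$, take $\Omega=Y^X$, $\mu=\nu$, and the deterministic responses $p_A(y\:|\:x,f)=p_B(y\:|\:x,f)=\indicator{y=f(x)}$. This exhibits the sampling strategy as a local hidden variables strategy of the form (\ref{eqn:classical:definition}), and it is synchronous because each summand is the synchronous correlation (\ref{eqn:function:definition}) of a single function and synchronous correlations form a convex set. To see the two assignments invert one another, note that running the forced-determinism argument on the sampling strategy for $\nu$ returns $\Phi=\id_{Y^X}$ on the support of $\nu$ (passing to the support of $\mu$, as was done silently above), so the pushforward recovers $\nu$; conversely, applying the sampling construction to $\nu=\Phi_*\mu$ reproduces the original correlation by the displayed identity. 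Since the analysis above shows the local responses of \emph{any} synchronous hidden variables strategy are forced to be the deterministic maps $f_\omega$, the distribution $\nu$ is the only genuine parameter, and the correspondence is a bijection once strategies are identified up to relabelling and up to splitting a hidden-variable value into several carrying the same function.

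The one point that genuinely needs care — and, I expect, the only real obstacle — is this last identification: a strategy written in the form (\ref{eqn:classical:definition}) carries redundant data (the cardinality of $\Omega$, zero-weight outcomes, and several $\omega$ with the same $f_\omega$), so ``bijective'' should be read as a bijection with strategies taken modulo the natural equivalence that collapses this redundancy, equivalently with the reduced data $(\Omega\hookrightarrow Y^X,\mu)$. Everything else is the forced-determinism computation already done before the statement, together with the elementary observation that, once $p_A$ and $p_B$ are deterministic and equal, the correlation depends on $(\Omega,\mu)$ only through its pushforward to $Y^X$; no new estimates or constructions are required.
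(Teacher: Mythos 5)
Your proposal is correct and follows the paper's approach: the forced-determinism argument in the paragraphs preceding the theorem is the entire substance of the proof, and you simply formalize its conclusion by pushing $\mu$ forward along $\omega\mapsto f_\omega$, checking that the resulting correlation is the sampling strategy for $\nu=\Phi_*\mu$, and supplying the obvious inverse. Your caveat that ``bijective'' must be read modulo the redundancy in the data $(\Omega,\mu,p_A,p_B)$ is well taken --- at the level of correlations the parametrization by distributions on $Y^X$ is not even injective once $|X|\geq 3$, as the paper's own later count (eight deterministic correlations spanning only a $6$-dimensional polytope for $X=\{0,1,2\}$, $Y=\{0,1\}$) shows --- so the statement is really about strategies up to the natural equivalence, exactly as you say.
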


\begin{corollary}\label{corollary:classical-extreme}
    The extreme points of the synchronous hidden variables strategies from $X$ to $Y$ can be canonically identified with the set of functions $X \to Y$.
\end{corollary}

\begin{corollary}\label{corollary:classical:symmetric}
    Every synchronous classical strategy is symmetric.
\end{corollary}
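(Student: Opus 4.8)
The plan is to derive Corollary~\ref{corollary:classical:symmetric} directly from Theorem~\ref{theorem:classical:characterization}, since the theorem has already done the structural heavy lifting: every synchronous classical strategy arises from a probability distribution on $Y^X$. So suppose $p$ is a synchronous classical correlation, and let $\nu$ be the corresponding probability distribution on $Y^X$, so that by Theorem~\ref{theorem:classical:characterization} we may write
$$p(y_A,y_B\:|\:x_A,x_B) = \sum_{f\in Y^X} \nu(f)\,\indicator{y_A = f(x_A)}\indicator{y_B = f(x_B)}.$$
Now I would simply read off symmetry by swapping the roles of the two players and the two questions on the right-hand side: the summand $\nu(f)\,\indicator{y_A = f(x_A)}\indicator{y_B = f(x_B)}$ is manifestly invariant under the simultaneous exchange $(y_A,x_A)\leftrightarrow(y_B,x_B)$, because it is a product of the scalar $\nu(f)$ with two indicator factors, one for each player, and exchanging the players merely permutes these two factors. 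Hence
$$p(y_A,y_B\:|\:x_A,x_B) = \sum_{f\in Y^X}\nu(f)\,\indicator{y_B = f(x_B)}\indicator{y_A = f(x_A)} = p(y_B,y_A\:|\:x_B,x_A),$$
which is exactly the defining condition for $p$ to be symmetric.

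In the write-up I would make explicit the one small point being used from the preceding discussion, namely that the correlation coming from a single function $f$, equation~(\ref{eqn:function:definition}), is symmetric — this is the base case — and that a convex combination of symmetric correlations is symmetric, which is the observation already noted in the text after the definition of symmetric correlations. Thus the corollary is really the statement ``every synchronous classical strategy is a convex combination of the symmetric correlations~(\ref{eqn:function:definition}), hence symmetric.'' I expect no genuine obstacle here: the content is entirely in Theorem~\ref{theorem:classical:characterization}, and the corollary is a one-line consequence. The only thing to be careful about is not to reprove the theorem — I would cite it and use the canonical form of the strategy it provides rather than going back to the local-hidden-variable definition~(\ref{eqn:classical:definition}) and re-running the argument that collapses $p_A,p_B$ to deterministic functions.
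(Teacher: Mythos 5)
Your proof is correct and is exactly the argument the paper intends: apply Theorem~\ref{theorem:classical:characterization} to write the synchronous classical strategy as a convex combination of the function correlations~(\ref{eqn:function:definition}), and use the earlier observation that each such correlation is symmetric and that symmetry is preserved under convex combinations. The paper states the corollary without a separate proof for precisely this reason.
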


\begin{lemma}
    Let $p(y_A,y_B\:|\:x_A,x_B) = \tr(\rho(E^{x_A}_{y_A}\otimes F^{x_B}_{y_B}))$ be a synchronous quantum correlation. Then the POVMs $\{E^x_y\}_{y\in Y}$ and $\{F^x_y\}_{y\in Y}$, for $x\in X$, are projection-valued measures. Moreover each $E^x_y$ commutes with $\tr_B(\rho)$ and each $F^x_y$ commutes with $\tr_A(\rho)$.
\end{lemma}

\begin{lemma}
    Every synchronous quantum correlation can be expressed as the convex combination of synchronous quantum correlations with maximally entangled pure states. In particular, if a synchronous quantum correlation $\tr(\rho(E^{x_A}_{y_A}\otimes F^{x_B}_{y_B}))$ is extremal then we may take $\rho = \ket\psi\bra\psi$ with $\ket\psi$ maximally entangled.
\end{lemma}

\begin{theorem}\label{theorem:tracial-state}
Let $X,Y$ be finite sets, $\h{H}$ a $d$-dimensional Hilbert space, and for each $x\in X$ a projection-valued measure $\{E^x_y\}_{y\in Y}$ on $\h{H}$. Then
$$p(y_A,y_B\:|\: x_A,x_B) = \frac{1}{d}\tr(E^{x_A}_{y_A}E^{x_B}_{y_B})$$
defines a synchronous quantum correlation. Moreover every synchronous quantum correlation with maximally entangled pure state has this form.
\end{theorem}

\begin{corollary}\label{corollary:quantum-symmetry}
    Every synchronous quantum correlation is symmetric.
\end{corollary}

In complete generality, correlations as given above allow arbitrary communication between Alice and Bob. Primarily we focus on \emph{nonlocal games}, by which we mean Alice and Bob may utilize preshared information (such as the EPR pair in the protocol) but cannot communicate once they receive their inputs $x_A,x_B$ \cite{cleve2004consequences}. Recall the well-known nonsignaling conditions \cite{popescu1994quantum}.

\begin{definition}
    A correlation $p$ is \emph{nonsignaling} if it satisfies (i) for all $y_A,x_A,x_B,x_B'$
    $$\sum_{y_B} p(y_A,y_B\:|\:x_A,x_B) = \sum_{y_B} p(y_A,y_B\:|\:x_A,x_B'),$$
    and (ii) for all $y_B,x_B,x_A,x_A'$
    $$\sum_{y_A} p(y_A,y_B\:|\:x_A,x_B) = \sum_{y_A} p(y_A,y_B\:|\:x_A',x_B).$$
\end{definition}
All classical and quantum correlations are nonsignalling. In fact, the nonsignaling correlations form a polytope with the classical correlations a subpolytope of it \cite{tsirelson1993some,van2013implausible}. A Bell inequality is a facet of the classical polytope that is not a facet of the nonsignaling polytope. In the traditional Bell inequality, from which standard device-independent quantum key distribution protocols are derived, Alice and Bob are each asked two questions and each provide two answers \cite{bell1964einstein,clauser1969proposed}. However, that correlation is neither symmetric nor synchronous. In fact, in Appendix \ref{appendix:case2-2} we prove that any synchronous correlation where Alice and Bob have two measurement bases (regardless of the number of measurement outcomes) has no Bell inequalities in the sense that any symmetric synchronous nonsignaling correlation is classical. In particular, any synchronous quantum correlation with only two bases must be classical. 

Starting in the next section we will characterize the symmetric and synchronous nonsignaling correlation with two outcomes (regardless of the number of bases), and find that in the case when $|X| = 3$ and $|Y| = 2$ there are precisely four synchronous Bell inequalities. Some similar results have been found for asynchronous correlations \cite{jones2005interconversion,collins2004relevant}. We then examine violation of these Bell inequalities and prove a synchronous analogue of Tsirel'son bound: a synchronous quantum correlation can violate one such inequality by at most $\frac{1}{8}$. We prove rigidity of the correlations that achieve maximal violations. Any of these isolated correlations can be used as a self-test for an EPR pair \cite{mayers1998quantum, mayers2003self, mckague2012robust}. With the proper choice of three measurement bases, our protocol can implement such a self-test while simultaneously generating shared keys.

Unfortunately, this introduces a ``synchronicity loophole'' in that the rigidity of these quantum correlations are restricted to the synchronous case. In Section \ref{section:measures}, we extend beyond synchronous correlations and show that there are natural measures of asymmetry, a form of bias, and asychronicity, and these bound the potential synchronous Bell violations realizable by general classical correlations.

Like with CHSH inequalities, the correlations we produce can easily be simulated with sufficient classical communication \cite{brassard1999cost, toner2003communication}, leading to the ``locality'' or ``causality'' loopholes common to device-independent protocols. In Section \ref{section:causality-loophole}, we provide an alternative security assumption that allows us to circumvent this: if Eve has even a small uncertainty about of the choice of bases Alice and Bob use to measure, then she has limited ability to generate a Bell violation regardless of her communication or computing power. Our method of analysis also allows us to examine the robustness of the protocol under several noise models.

\end{section}

\begin{section}{Synchronous correlations}\label{section:synchronous}

When studying correlations with $|X| = n$ and $|Y|=2$, and for concreteness say $Y = \{0,1\}$, it is particularly fruitful to work with the traditional biases and correlation matrices:
\begin{align*}
    a_{x_A} &= \sum_{y_A, y_B}{(-1)^{(1,0)\cdot(y_A, y_B)} p(y_A, y_B | x_A, x_B)}\\
    b_{x_B} &= \sum_{y_A, y_B}{(-1)^{(0,1)\cdot(y_A, y_B)} p(y_A, y_B | x_A, x_B)}\\
    c_{x_A,x_B} &= \sum_{y_A, y_B}{(-1)^{(1,1)\cdot(y_A, y_B)} p(y_A, y_B | x_A, x_B)}.
\end{align*}
Note that the nonsignalling criteria implies that $a$ and $b$ do not depend on $x_B$ or $x_A$ respectively. Any probability distribution has
$$1 = \sum_{y_A, y_B}{(-1)^{(0,0)\cdot (y_A, y_B)} p(y_A, y_B | x_A, x_B)}$$
and so we can invert these as
\begin{equation}\label{eqn:inverse-relation}
\begin{array}{rl}
    p(0,0| x_A, x_B) &= \tfrac{1}{4}\left(1 + a_{x_A} + b_{x_B} + c_{x_A, x_B}\right),\\
    p(0,1| x_A, x_B) &= \tfrac{1}{4}\left(1 + a_{x_A} - b_{x_B} - c_{x_A, x_B}\right),\\
    p(1,0| x_A, x_B) &= \tfrac{1}{4}\left(1 - a_{x_A} + b_{x_B} - c_{x_A, x_B}\right),\\
    p(1,1| x_A, x_B) &= \tfrac{1}{4}\left(1 - a_{x_A} - b_{x_B} + c_{x_A, x_B}\right).
\end{array}
\end{equation}
As each $p$ term is nonnegative we have the $4n^2$ inequalities in $n^2+2n$ variables
\begin{equation}\label{eqn:nonsignaling-inequalities}
\begin{array}{rl}
    1 + a_{x_A} + b_{x_B} + c_{x_A, x_B} &\geq 0,\\
    1 + a_{x_A} - b_{x_B} - c_{x_A, x_B} &\geq 0,\\
    1 - a_{x_A} + b_{x_B} - c_{x_A, x_B} &\geq 0,\\
    1 - a_{x_A} - b_{x_B} + c_{x_A, x_B} &\geq 0,
\end{array}
\end{equation}    
which form the basis for our description of the polytope of nonsignaling correlations below.

\begin{lemma}\label{lemma:symmetric-reduction}
    A correlation $p$ is symmetric and nonsignalling if and only if (i) $c_{x_A,x_B} = c_{x_B,x_A}$ and (ii) $a_x = b_x$.
\end{lemma}
\begin{proof}
    Suppose $p$ is symmetric. Then 
    \begin{align*}
        a_{x_A} &= \sum_{y_A, y_B}{(-1)^{(1,0)\cdot(y_A, y_B)} p(y_A, y_B | x_A, x_B)}\ 
        =\ \sum_{y_A, y_B}{(-1)^{(1,0)\cdot(y_A, y_B)} p(y_B, y_A | x_B, x_A)}\\
        &= \sum_{y_A, y_B}{(-1)^{(1,0)\cdot(y_B, y_A)} p(y_A, y_B | x_B, x_A)}\ 
        =\ \sum_{y_A, y_B}{(-1)^{(0,1)\cdot(y_A, y_B)} p(y_A, y_B | x_B, x_A)}\ 
        =\ b_{x_A}.
    \end{align*}
    Similarly,
    \begin{align*}
        c_{x_A, x_B} &= \sum_{y_A, y_B}{(-1)^{(1,1)\cdot(y_A, y_B)} p(y_A, y_B | x_A, x_B)}\ 
        =\ \sum_{y_A, y_B}{(-1)^{(1,1)\cdot(y_A, y_B)} p(y_B, y_A | x_B, x_A)}\\
        &= \sum_{y_A, y_B}{(-1)^{(1,1)\cdot(y_B, y_A)} p(y_A, y_B | x_B, x_A)}\ 
        =\ \sum_{y_A, y_B}{(-1)^{(1,1)\cdot(y_A, y_B)} p(y_A, y_B | x_B, x_A)}\ 
        =\ c_{x_B,x_A}.
    \end{align*}
\end{proof}

From this lemma, the polytope of symmetric nonsignalling correlations lives in a $\frac{1}{2}n(n-1)$ dimensional space. Its facets are just the inequalities (\ref{eqn:nonsignaling-inequalities}) reduced by the equations in this lemma. 
\begin{equation}\label{eqn:symmetric-nonsignaling-inequalities}
\begin{array}{rl}
    \left.\begin{aligned}
        1 - c_{j,j} &\geq 0\\
        1 + 2a_j + c_{j,j} &\geq 0\\
        1 - 2a_j + c_{j,j} &\geq 0
    \end{aligned}\right\}
    & \text{ for $j = 0, \dots, n-1$}\\
    \left.\begin{aligned}
        1 + a_j + a_k + c_{j,k} &\geq 0\\
        1 + a_j - a_k + c_{j,k} &\geq 0\\
        1 - a_j + a_k + c_{j,k} &\geq 0\\
        1 - a_j - a_k + c_{j,k} &\geq 0
    \end{aligned}\right\}
    & \text{ for $0 \leq j < k \leq n-1$. }
\end{array}
\end{equation}
We will study this polytope more in \S{\ref{section:measures}}.

\begin{lemma}\label{lemma:synchonous-reduction}
    A correlation $p$ is synchronous and nonsignalling if and only if for all $x\in X$ we have (i) $c_{x,x} = 1$ and (ii) $a_x = b_x$.
\end{lemma}
\begin{proof}
    Assume $p$ is synchronous and nonsignalling. Thus $p(y_A, y_B | x, x) = 0$ if $y_A \neq y_B$. We always have
    \begin{equation}\label{eqn:synch1} 
        p(0,0| x, x) + p(0,1| x, x)+ p(1,0| x, x) + p(1,1| x, x) = p(0,0| x, x) + p(1,1| x, x) = 1.
    \end{equation}
    We have:
    \begin{align*}
        c_{x,x} &= \sum_{y_A, y_B}{(-1)^{(1,1).(y_A, y_B)} p(y_A, y_B | x, x)}\\
        &= p(0,0| x, x) - p(0,1| x, x) - p(1,0| x, x) + p(1,1| x, x)\\
        &= p(0,0| x, x) + p(1,1| x, x) = 1 \quad \text{from (\ref{eqn:synch1}).}
    \end{align*}
    Similarly,
    \begin{align*}
        a_x &= \sum_{y_A, y_B}{(-1)^{(1,0).(y_A, y_B)} p(y_A, y_B | x, x)}\\
        &= p(0,0| x, x) + p(0,1| x, x) - p(1,0| x, x) - p(1,1| x, x) = p(0,0| x, x) - p(1,1| x,x)\\
        &= p(0,0| x, x) - p(0,1| x, x) + p(1,0| x, x) - p(1,1| x, x) = \sum_{y_A, y_B}{(-1)^{(0,1).(y_A, y_B)} p(y_A, y_B | x, x)} = b_x
    \end{align*}
    
    Now we prove the other direction. Assume $c_{x,x} =1$, and $a_x = b_x$. For $p$ to be synchronous we need $p(0,1|x,x) = 0 = p(1,0|x,x)$.
    \begin{align*}
        p(0,1|x,x) &= 1 + a_{x} - b_{x} - c_{x,x} = 1 + a_{x} - a_{x} - 1 = 0, \text{ and}\\
        p(1,0|x,x) &= 1 - a_{x} + b_{x} - c_{x,x} = 1 - a_{x} + a_{x} - 1 = 0.\\
    \end{align*}
    Thus, we have $p$ is a synchronous correlation.
\end{proof}

Similarly, the equations of this lemma imply the polytope of synchronous nonsignalling correlations live in an $n^2$ dimensional space. Its $n^2$ facets are defined by the $4n(n-1)$ inequalities:
\begin{equation}\label{eqn:synchronous-nonsignaling-inequalities}
\begin{array}{rl}
    \left.\begin{aligned}
        1 + a_j + a_k + c_{j,k} &\geq 0\\
        1 + a_j - a_k - c_{j,k} &\geq 0\\
        1 - a_j + a_k - c_{j,k} &\geq 0\\
        1 - a_j - a_k + c_{j,k} &\geq 0
    \end{aligned}\right\}
    & \text{ for $0 \leq j \not= k \leq n-1$. }
\end{array}
\end{equation}
Note that a reduction of the inequalities (\ref{eqn:nonsignaling-inequalities}) by the conditions in the lemma also provide that $-1 \leq a_j \leq 1$ (for $j=0, \dots, n-1$), but these follow from (\ref{eqn:synchronous-nonsignaling-inequalities}).

Yet from Corollary \ref{corollary:classical:symmetric}, all synchronous hidden variables correlations are symmetric. So we are primarily interested in the polytope of symmetric and synchronous nonsignaling correlations. This $\frac{n(n+1)}{2}$-dimensional polytope has $2n(n-1)$ facets given by:
\begin{equation}\label{eqn:symmetric-synchronous-nonsignaling-inequalities}
\begin{array}{rl}
    \left.\begin{aligned}
        1 + a_j + a_k + c_{j,k} &\geq 0\\
        1 + a_j - a_k - c_{j,k} &\geq 0\\
        1 - a_j + a_k - c_{j,k} &\geq 0\\
        1 - a_j - a_k + c_{j,k} &\geq 0
    \end{aligned}\right\}
    & \text{ for $0 \leq j < k \leq n-1$. }
\end{array}
\end{equation}
We will also study this polytope more in \S{\ref{section:measures}}.

To derive synchronous Bell inequalities, we must examine the structure of the local hidden variables polytope. From Corollary \ref{corollary:classical-extreme}, this polytope of correlations has the functions $f:X \to Y$ as its vertices. For $Y=\{0,1\}$ and $X = \{0,\dots,k\}$ we can associate each such function to an integer in $r \in \{0,\dots, 2^k-1\}$: the function $f_r$ has $f_r(x) = j$ where $j$ is the $x$-th bit of $r$. In particular, for $X = \{0,1,2\}$ there are eight functions $\{f_0, \dots, f_7\}$. We will use this representation in some concrete examples below.

For general $X$ this polytope is quite complicated, but for $X = \{0,1,2\}$ it is easy to describe: its facets are the twelve facets (\ref{eqn:symmetric-synchronous-nonsignaling-inequalities}) above plus the additional ones:
\begin{equation}\label{eqn:Bell-inequalities}
    \begin{array}{rll}
    J_0 &= \tfrac{1}{4}\left(1 - c_{01} - c_{02} + c_{12}\right) &\geq 0\\
    J_1 &= \tfrac{1}{4}\left(1 - c_{01} + c_{02} - c_{12}\right) &\geq 0\\
    J_2 &= \tfrac{1}{4}\left(1 + c_{01} - c_{02} - c_{12}\right) &\geq 0\\
    J_3 &= \tfrac{1}{4}\left(1 + c_{01} + c_{02} + c_{12}\right) &\geq 0.
    \end{array}
\end{equation}
So a symmetric synchronous nonsignaling correlation, which includes any synchronous quantum correlation, is classical if and only if these four inequalities are satisfied. We refer to these as \emph{synchronous Bell inequalities} for this reason.

Turning to synchronous quantum correlations, according to Tsirl'son's theorem \cite{tsirelson1980quantum} if $p$ is a quantum correlation then there exist unit vectors $\{\vec{u}_x,\vec{v}_x\}_{x \in X}$ in some real inner product spaces such that $c_{x_A,x_B} = \langle \vec{u}_{x_A}, \vec{v}_{x_B}\rangle$. Conversely, given a matrix $C = (c_{x_A,x_B})$ with $c_{x_A,x_B} = \langle \vec{u}_{x_A}, \vec{v}_{x_B}\rangle$ there exists a quantum correlation with this correlation matrix (in particular, one with $a_x = b_x = 0$). 

\begin{lemma}\label{lemma:quantum-reduction}
    A correlation matrix $C$ is synchronous and quantum if and only if there exists unit vectors $\{\vec{u}_x\}$ such that $c_{x_A,x_B} = \langle \vec{u}_{x_A}, \vec{u}_{x_B}\rangle$.
\end{lemma}
\begin{proof}
    We assume the correlation is quantum synchronous, and thus have that $c_{x,x} = \langle \vec{u}_x, \vec{v}_x \rangle = 1$ for all $x \in \{0. \dots, n-1\}$. By Cauchy-Schwarz we have $1 = \langle \vec{u}_x, \vec{v}_x \rangle \leq ||\vec{u}_x|| ||\vec{v}_x|| \leq 1$. Therefore, since equality holds in the Cauchy-Schwarz inequality, we must have $\vec{u}_x = \vec{v}_x$ for all $x$.

    Conversely, assume $\vec{u}_x = \vec{v}_x$ for all $x \in \{0, \dots, n-1\}$. Thus for all $x$, we have $c_{x,x} = \langle \vec{u}_{x}, \vec{v}_{x} \rangle = \langle \vec{u}_{x}, \vec{u}_{x} \rangle = ||\vec{u}_x||^2 = 1$. Thus, $C$ is quantum synchronous.
\end{proof}

Note the similarity of this result with the representation of a quantum correlation by tracial states, Theorem \ref{theorem:tracial-state}. So Lemmas \ref{lemma:symmetric-reduction}, \ref{lemma:synchonous-reduction}, and \ref{lemma:quantum-reduction} provide an alternative proof of Corollary \ref{corollary:quantum-symmetry}, that every quantum synchronous correlation must be symmetric.

\end{section}

\begin{section}{Tsirl'son bounds and rigidity}

In the previous section, we characterized the polytopes of synchronous symmetric nonsignaling and hidden variables strategies from $X = \{0,1,2\}$ to $Y = \{0,1\}$, and obtained four synchronous analogous of Bell inequalities (\ref{eqn:Bell-inequalities}). In this section we give concrete strategies that violate these inequalities. But first note that there are relationships between them. For example summing the definitions of the four $J$-terms in (\ref{eqn:Bell-inequalities}) gives $J_0 + J_1 + J_2 + J_3 = 1$.
In particular, we note that for $0 \leq j < k \leq 2$ we have
$$J_j + J_k = \tfrac{1}{2}(1 - c_{j,k}).$$
Even in the full polytope of all nonsignaling correlations we have $1 - c_{j,k} \geq 0$, obtained by summing the middle two inequalities of (\ref{eqn:nonsignaling-inequalities}). A similar argument works when $J_3$ is one of the terms; for example
$$J_0 + J_3 = 1 - J_1 - J_2 = \frac{1}{2}(1 + c_{j,k}) \geq 0$$
from summing the outer two inequalities of (\ref{eqn:nonsignaling-inequalities}). Therefore, if a (necessarily nonclassical) nonsignaling correlation violates a Bell inequality, say $J_3 < 0$, then it must have $J_1,J_2,J_3 > 0$ as pairwise these sum to a nonnegative value.

\begin{proposition}
    Every synchronous symmetric nonsignaling strategy satisfies $J_0, J_1, J_2, J_3 \geq -\frac{1}{2}$. However no individual correlation can violate more than one of the inequalities $J_0, J_1, J_2, J_3\geq 0$.
\end{proposition}
\begin{proof}
    We obtain this bound by summing the appropriate inequalities. For instance,
    \begin{align*}
        J_0 + J_3 &\geq 0\\
        J_1 + J_3 &\geq 0\\
        J_2 + J_3 &\geq 0
    \end{align*}
    and using $J_0 + J_1 + J_2 + J_3 = 1$ produces $1 - 2J_3 \geq 0$.
\end{proof}

Quantum correlations can also violate the Bell inequalities (\ref{eqn:Bell-inequalities}). And these will also have similar bounds, which  typically called Tsirl'son bounds after his seminal work on Bell's inequality \cite{tsirelson1980quantum}. As quantum correlations are nonsignaling, the above argument shows that only one inequality can be violated for any given correlation. Our analogue of Tsirl'son's bound is given in the following theorem. The next theorem proves that these bounds are saturated by a unique quantum correlation.

\begin{theorem}\label{theorem:synchronous-quantum-bounds}
    Every synchronous quantum correlation satisfies $J_0, J_1, J_2, J_3 \geq -\frac{1}{8}$. However no individual correlation can violate more than one of the inequalities $J_0, J_1, J_2, J_3 \geq 0$.
\end{theorem}
\begin{proof}
    Since quantum correlations are nonsignaling, the second claim follows as indicated above. To prove the stated bounds, we note the space of quantum strategies is convex and each $J_j$ is linear, thus the extreme values will occur at extremal quantum strategies. By Theorem \ref{theorem:tracial-state} these take the form
    $$p(y_A,y_B\:|\: x_A,x_B) = \frac{1}{d}\tr(E^{x_A}_{y_A}E^{x_B}_{y_B}),$$
    for projection-valued measures $\{E^x_0,E^x_1\}_{x = 0,1,2}$ on $\h{H} = \mathbb{C}^d$. Recall that
    \begin{align*}
        a_x &= p(0,0\:|\:x,x) - p(1,1\:|\:x,x),\text{ and}\\
        c_{x_Ax_B} &= p(0,0\:|\:x_A,x_B) + p(1,1\:|\:x_A,x_B) - p(0,1\:|\:x_A,x_B) - p(1,0\:|\:x_A,x_B),
    \end{align*}
    and so by defining the $\pm 1$-valued observables $M_x = E^x_0 - E^x_1$, we have
    $$a_x = \frac{1}{d}\tr(M_x) \text{ and } c_{x_Ax_B} = \frac{1}{d}\tr(M_{x_A}M_{x_B}).$$
    So focussing on say $J_3$ we compute
    \begin{align*}
        \frac{1}{d}\tr((M_0 + M_1 + M_2)^2) &= \frac{1}{d}\left[\tr(M_0^2) + \tr(M_1^2) + \tr(M_2^2)\right.\\
        &\quad + \left.2\tr(M_0M_1) + 2\tr(M_0M_2) + 2\tr(M_1M_2)\right]\\
        &= 3 + 2(c_{0,1} + c_{0,2} + c_{1,2}) = 1 + 8J_3.
    \end{align*}
    Therefore
    $$J_3 = -\frac{1}{8} + \frac{1}{8d}\tr((M_0 + M_1 + M_2)^2) \geq -\frac{1}{8}.$$
    
    Similarly bounds on $J_0$, $J_1$, and $J_2$ arise from $\tr((-M_0 + M_1 + M_2)^2)$, $\tr((M_0 - M_1 + M_2)^2)$, and $\tr((M_0 + M_1 - M_2)^2)$ respectively.
\end{proof}

\begin{theorem}\label{thm:rigidity}
    For each of the bounds of Theorem \ref{theorem:synchronous-quantum-bounds}, there exists a unique synchronous quantum correlation from $\{0,1,2\}$ to $\{0,1\}$ that achieves it.
\end{theorem}
\begin{proof}
    We prove this for $J_3 = -\frac{1}{8}$; the cases for $J_0, J_1, J_2$ are similar.
    
    Continuing the notation above, suppose we have observables $M_0, M_1, M_2$ on a Hilbert space $\mathfrak{H}$ for which
    $$-\frac{1}{8} = J_3 = \frac{1}{4}\left(1 + \frac{1}{d}\tr(M_0M_1) + \frac{1}{d}\tr(M_0M_2) + \frac{1}{d}\tr(M_1M_2)\right).$$
    This implies $\frac{1}{d}\tr((M_0 + M_1 + M_2)^2) = 1 + 8J_3 = 0$, and hence $M_0 + M_1 + M_2 = 0$. In particular, taking the square of $M_2 = -(M_0 + M_1)$ gives
    $$\openone = (M_0 + M_1)^2 = 2\openone + M_0M_1 + M_1M_0.$$
    That is, the anticommutator $\{M_0,M_1\} = -\openone$.

    We apply two projections theory \cite{halmos1969two, amrein1994pairs, boettcher2010gentle}, which is the predecessor to Jordan's principle component decomposition, to $E^0_0$ and $E^1_0$. This provides a decomposition of our Hilbert space
    $$\mathfrak{H} = \mathfrak{L}_{00}\oplus \mathfrak{L}_{01} \oplus \mathfrak{L}_{10} \oplus \mathfrak{L}_{11} \oplus \bigoplus_{j=1}^k \mathfrak{H}_j$$
    where $\mathfrak{L}_{\lambda\mu}$ is a joint eigenspace of $E^0_0$ and $E^1_0$ associated to eigenvalues $\lambda$ and $\mu$ respectively, and on the two-dimensional subspace $\mathfrak{H}_j$ there exists a basis for which the action of $E^0_0$ and $E^1_0$ have matrix representation
    $$[E^0_0] = \left(\begin{array}{cc} 1 & 0 \\ 0 & 0\end{array}\right), \text{ and } [E^1_0] = \left(\begin{array}{cc} \cos^2\theta_j & \sin\theta_j\cos\theta_j \\ \sin\theta_j\cos\theta_j & \sin^2\theta_j\end{array}\right).$$
    If $\dim\mathfrak{L}_{\lambda\mu} > 0$ then
    $$\left.M_0\right|_{\mathfrak{L}_{\lambda\mu}} = (-1)^\lambda\openone|_{\mathfrak{L}_{\lambda\mu}} \text{ and } \left.M_1\right|_{\mathfrak{L}_{\lambda\mu}} = (-1)^\mu \openone|_{\mathfrak{L}_{\lambda\mu}},$$
    however this implies 
    $$\left.M_2^2\right|_{\mathfrak{L}_{\lambda\mu}} = ((-1)^\lambda + (-1)^\mu)^2\openone|_{\mathfrak{L}_{\lambda\mu}} \not= \openone|_{\mathfrak{L}_{\lambda\mu}}.$$
    Hence $\dim\mathfrak{L}_{\lambda\mu} = 0$ and $\mathfrak{H} = \bigoplus_{j=1}^{d/2} \mathfrak{H}_j$ where $d = \dim\mathfrak{H}$ must be even.

    Turning to $\mathfrak{H}_j$ we have
        $$[\left.M_0\right|_{\mathfrak{H}_j}] = \left(\begin{array}{cc} 1 & 0 \\ 0 & -1 \end{array}\right) \text{ and } [\left.M_1\right|_{\mathfrak{H}_j}] = \left(\begin{array}{cc} \cos^2\theta_j - \sin^2\theta_j & 2\sin\theta_j\cos\theta_j \\ 2\sin\theta_j\cos\theta_j & \sin^2\theta_j - \cos^2\theta_j\end{array}\right).$$
    We then compute
        $$[\{\left.M_0\right|_{\mathfrak{H}_j},\left.M_1\right|_{\mathfrak{H}_j}\}] = 2(\cos^2\theta_j - \sin^2\theta_j)\openone,$$
    and so conclude that $\theta_j = \frac{1}{2}\cos^{-1}\left(\frac{1}{2}\right)$. In particular, on every $\mathfrak{H}_j$
        \begin{equation}\label{eqn:J3measurement}
        [\left.M_0\right|_{\mathfrak{H}_j}] = \left(\begin{array}{cc} 1 & 0 \\ 0 & -1 \end{array}\right),\   [\left.M_1\right|_{\mathfrak{H}_j}] = \frac{1}{2}\left(\begin{array}{cc} -1 & \sqrt{3} \\ \sqrt{3} & 1\end{array}\right) \text{ and } [\left.M_2\right|_{\mathfrak{H}_j}] = \frac{1}{2}\left(\begin{array}{cc} -1 & -\sqrt{3} \\ -\sqrt{3} & 1\end{array}\right).
        \end{equation}
    In particular we have
    $$\begin{array}{rl@{\qquad}rl}
        c_{0,1} &= \tfrac{1}{d}\tr(M_0) = 0, & c_{0,1} &= \tfrac{1}{d}\tr(M_0M_1) = -\tfrac{1}{2},\\
        c_{0,2} &= \tfrac{1}{d}\tr(M_1) = 0, & c_{0,2} &= \tfrac{1}{d}\tr(M_0M_1) = -\tfrac{1}{2},\\
        c_{1,2} &= \tfrac{1}{d}\tr(M_2) = 0, & c_{1,2} &= \tfrac{1}{d}\tr(M_1M_2) = -\tfrac{1}{2}.
    \end{array}$$
    And therefore the matrix of the correlation must be
    \begin{equation}\label{eqn:J3correlation}
    [p(y_A,y_B|x_A,x_B)] = \frac{1}{8} \left(\begin{array}{ccccccccc} 
            4 & 1 & 1 & 1 & 4 & 1 & 1 & 1 & 4\\
            0 & 3 & 3 & 3 & 0 & 3 & 3 & 3 & 0\\
            0 & 3 & 3 & 3 & 0 & 3 & 3 & 3 & 0\\
            4 & 1 & 1 & 1 & 4 & 1 & 1 & 1 & 4
    \end{array}\right).
    \end{equation}
    
    Conversely, one easily computes that this matrix has $J_3 = -\frac{1}{8}$.
\end{proof}

The unique correlation (among synchronous quantum correlations) with $J_3 = -\frac{1}{8}$ has $a_0 = a_1 = a_2 = 0$ and $c_{0,1} = c_{0,2} = c_{1,2} = -\frac{1}{2}$. We could get even stronger violations among general synchronous symmetric nonsignaling correlations. For example $J_3 = -\frac{1}{2}$ for the correlation with $a_0 = a_1 = a_2 = 0$ and $c_{0,1} = c_{0,2} = c_{1,2} = -1$. Namely,
$$[p(y_A, y_B|x_A,x_B)] = \frac{1}{2}\left(\begin{array}{ccc|ccc|ccc}
1 & 0 & 0 & 0 & 1 & 0 & 0 & 0 & 1\\
0 & 1 & 1 & 1 & 0 & 1 & 1 & 1 & 0\\
0 & 1 & 1 & 1 & 0 & 1 & 1 & 1 & 0\\
1 & 0 & 0 & 0 & 1 & 0 & 0 & 0 & 1
\end{array}\right).$$
However to build such a correlation from nonsignaling components would require these have super-quantum behavior. 

\begin{example}
Recall that the PR-box is a two input/two output correlation given by
$$p(y'_A, y'_B|x'_A,x'_B) = \left\{\begin{array}{cl} \frac{1}{2} & \text{if $y'_A\oplus y'_B = x'_A\cdot x'_B$,} \\ 0 & \text{otherwise.} \end{array}\right.$$
That is we have the correlation matrix
$$[p(y'_A, y'_B|x'_A,x'_B)] = \frac{1}{2}\left(\begin{array}{cccc}
1 & 1 & 1 & 0 \\
0 & 0 & 0 & 1 \\
0 & 0 & 0 & 1 \\
1 & 1 & 1 & 0 
\end{array}\right).$$
We claim that Alice and Bob can simulate the previous correlation if they share three PR-boxes. Namely, each of Alice and Bob is given input $x_A,x_B\in\{0,1,2\}$; they evaluate each of the three PR-boxes as follows: for box $j$ Alice assigns input $x'_A = 0$ if $j = x_A$ and $x'_A = 1$ if $j \not= x_A$, and identically for Bob. Obtaining the output from each of the three boxes, Alice sets $y_A$ to be the parity of the results, and identically for Bob. Note that if Alice and Bob received the same input, $x_A = x_B$, they program the same inputs into their PR-boxes: two get $(1,1)$ and one gets $(0,0)$. Consequently, Alice and Bob receive outputs that differ in precisely two places, and hence obtain same parity. On the other hand, if Alice and Bob receive different inputs, $x_A \not= x_B$, then their inputs to the PR-boxes consist of one each of $(0,1)$, $(1,0)$, and $(1,1)$. In this case their outputs differ in precisely one place and therefore their parities disagree.
\end{example}

\end{section}

\begin{section}{Measures of asynchronicity and asymmetry}\label{section:measures}

The rigidity result of the previous section allow Alice and Bob to produce a certificate that they share maximally entangled states. In the context of the above theorem, if a device produces an entangled pair and they make measurements according to (\ref{eqn:J3measurement}), then they can achieve statistics according the associated correlation (\ref{eqn:J3correlation}) if and only if those shared states are maximally entangled. This forms crux of the security in our device-independent quantum key distribution protocol: when Alice and Bob choose the same basis they obtain a shared secret bit, when they choose different bases they build the certificate that the device is producing maximally entangled states.

The loophole in this argument is that this protocol is rigid among synchronous quantum protocols, and so in principle there may be asynchronous protocols that can achieve $J_3 = -\frac{1}{8}$ without using a maximally entangled state. In this section, and Appendix \ref{appendix:notes}, we show to how to treat asynchronicity and asymmetry.

Recall that the nonsignaling polytope is given by facets are given by (\ref{eqn:nonsignaling-inequalities}), where now $j,k=0,1,2$: 
\begin{equation}\label{eqn:nspoly-facets}
    \begin{array}{rl}
    1 + a_{j} + b_{k} + c_{j,k} &\geq 0,\\
    1 - a_{j} - b_{k} + c_{j,k} &\geq 0,\\
    1 - a_{j} + b_{k} - c_{j,k} &\geq 0,\\
    1 + a_{j} - b_{k} - c_{j,k} &\geq 0.
    \end{array}
\end{equation}
In particular, this defines a fifteen dimensional polytope in $\mathbb{R}^{15}$. In the asynchronous case, a classical hidden variables strategy can be generated by taking a random selection of a pair of functions $f_A,f_B:X \to Y$, where the Alice computes $y_A = f_A(x_A)$ and Bob computes $y_B = f_B(x_B)$. The subpolytope of classical correlations is therefore the convex hull of $64$ pairs $(f_j,f_k)$ where $j,k = 0,\cdots, 7$. Unfortunately this is not so tractible to analyze as it has $684$ facets (which can simply be generated by mathematical software such as \textsc{Sage} \cite{sagemath}).

In any case, we define a new sets of variables,
$$\begin{array}{rll}
    A_{j,k} &= \tfrac{1}{2}(c_{j,k} - c_{k,j}) & \text{(``asymmetry'')}\\
    B_j &= a_j - b_j & \text{(``bias'')}\\
    C_{j,k} &= \tfrac{1}{2}(c_{j,k} + c_{k,j}) & \text{(for $j \not= k$)}\\
    S_j &= 1 - c_{j,j} & \text{(``asyncrhonicity'')}\\
    U_j &= a_j + b_j.
\end{array}$$
Now it is straightforward to express $a_j,b_k,c_{j,k}$ in terms of these new coordinates:
\begin{align*}
    a_j &= \tfrac{1}{2}(U_j + B_j)\\
    b_j &= \tfrac{1}{2}(U_j - B_j)\\
    c_{j,j} &= 1 - S_j\\
    c_{j,k} &= C_{j,k} + A_{j,k} \text{ (when $j\not= k$).}
\end{align*}
In these coordinates, the nonsignaling polytope has facets in two families. When $j = k$ these reduce to:
\begin{align*}
    |B_j| &\leq S_j\\
    |U_j| &\leq 2 - S_j.
\end{align*}
For $j\not= k$ we have:
\begin{align*}
    1 + C_{j,k} + \tfrac{U_j + U_k}{2} + A_{j,k} + \tfrac{B_j - B_k}{2} &\geq 0\\
    1 + C_{j,k} - \tfrac{U_j + U_k}{2} + A_{j,k} - \tfrac{B_j - B_k}{2} &\geq 0\\
    1 - C_{j,k} - \tfrac{U_j - U_k}{2} - A_{j,k} - \tfrac{B_j + B_k}{2} &\geq 0\\
    1 - C_{j,k} + \tfrac{U_j - U_k}{2} - A_{j,k} + \tfrac{B_j + B_k}{2} &\geq 0.
\end{align*}

From the lemmas of Section \ref{section:synchronous} we have:
\begin{enumerate}
    \item A correlation is symmetric if and only if $A_{j,k} = 0$ and $B_j = 0$.
    \item A correlation is synchronous if and only if $S_j = 0$.
\end{enumerate}
Note that $S_j = 0$ implies that $B_j = 0$ from the inequalities above. Hence the polytope of synchronous symmetric nonsignaling correlations is given by $-2 \leq U_j \leq 2$ and
\begin{align*}
    1 + C_{j,k} + \tfrac{U_j + U_k}{2} &\geq 0\\
    1 + C_{j,k} - \tfrac{U_j + U_k}{2} &\geq 0\\
    1 - C_{j,k} - \tfrac{U_j - U_k}{2} &\geq 0\\
    1 - C_{j,k} + \tfrac{U_j - U_k}{2} &\geq 0,
\end{align*}
lying within the subspace given by $A_{j,k} = 0$, $B_j = 0$, and  $S_j = 0$. Within this subspace the hidden variables polytope satisfies the additional inequalities
\begin{align*}
    J_0 = \tfrac{1}{4}(1 - C_{0,1} - C_{0,2} + C_{1,2}) &\geq 0\\
    J_1 = \tfrac{1}{4}(1 - C_{0,1} + C_{0,2} - C_{1,2}) &\geq 0\\
    J_2 = \tfrac{1}{4}(1 + C_{0,1} - C_{0,2} - C_{1,2}) &\geq 0\\
    J_3 = \tfrac{1}{4}(1 + C_{0,1} + C_{0,2} + C_{1,2}) &\geq 0.
\end{align*}
The challenge is to characterize how these inequalities behave as we move away from the subspace defined by symmetry and synchronicity.

While the general nonsignaling polytope is easy to describe, namely through the 36 facets (\ref{eqn:nspoly-facets}), the full hidden variables polytope is not. In particular, the single facet $J_3 \geq 0$ of the synchronous hidden variables polytope arises from the intersection of any of twelve different facets of the full hidden variables polytope with the space $A = B = S = 0$. A similar fact holds for each of $J_0, J_1, J_2$, only accounting for 48 of the 684 facets of the hidden variables polytope. Nonetheless, the 12 facets covering $J_3 \geq 0$ are 
\begin{equation}\label{eqn:hv-inequalities}
    \begin{array}{rl}
        \begin{aligned}
            (1 + C_{0,1} + C_{0,2} + C_{1,2}) + S_0 - A_{0,1} + A_{0,2} + A_{1,2} &\geq 0,\\
            (1 + C_{0,1} + C_{0,2} + C_{1,2}) + S_0 + A_{0,1} - A_{0,2} - A_{1,2} &\geq 0,\\
            (1 + C_{0,1} + C_{0,2} + C_{1,2}) + S_1 + A_{0,1} + A_{0,2} + A_{1,2} &\geq 0,\\
            (1 + C_{0,1} + C_{0,2} + C_{1,2}) + S_2 + A_{0,1} + A_{0,2} - A_{1,2} &\geq 0,\\
            (1 + C_{0,1} + C_{0,2} + C_{1,2}) + S_2 - A_{0,1} - A_{0,2} + A_{1,2} &\geq 0,\\
            (1 + C_{0,1} + C_{0,2} + C_{1,2}) + S_1 - A_{0,1} - A_{0,2} - A_{1,2} &\geq 0,\\
            2(1 + C_{0,1} + C_{0,2} + C_{1,2}) - B_0 + B_1 + S_0 + S_1 &\geq 0,\\
            2(1 + C_{0,1} + C_{0,2} + C_{1,2}) + B_0 - B_1 + S_0 + S_1 &\geq 0,\\
            2(1 + C_{0,1} + C_{0,2} + C_{1,2}) + B_0 - B_2 + S_0 + S_2 &\geq 0,\\
            2(1 + C_{0,1} + C_{0,2} + C_{1,2}) - B_0 + B_2 + S_0 + S_2 &\geq 0,\\
            2(1 + C_{0,1} + C_{0,2} + C_{1,2}) + B_1 - B_2 + S_1 + S_2 &\geq 0,\\
            2(1 + C_{0,1} + C_{0,2} + C_{1,2}) - B_1 + B_2 + S_1 + S_2 &\geq 0.
        \end{aligned}
    \end{array}
\end{equation}

And in particular we find
\begin{align*}
    J_3 &\geq \frac{1}{4}\max\{-S_0 + |A_{0,1} - A_{0,2} - A_{1,2}|, -S_1 + |A_{0,1} + A_{0,2} + A_{1,2}|, -S_2 + |A_{0,1} + A_{0,2} - A_{1,2}|\}\\
    J_3 &\geq \frac{1}{8}\max\{-S_0 - S_1 + |B_0 - B_1|, -S_0 - S_2 + |B_0 - B_2|, -S_1 - S_2 + |B_1 - B_2|\}.
\end{align*}
Therefore at a given level of asynchronicity $S = \max\{S_0,S_1,S_2\}$, maximal violations of $J_3 \geq 0$ occur when $S_0 = S_1 = S_2$ and $A_{0,1} = A_{0,2} = A_{1,2} = 0$ and $B_0 = B_1 = B_2$. That is, for any given asynchronicity a maximal violation will occur within the symmetric nonsignaling polytope.

The symmetric nonsignaling polytope and its symmetric hidden variables subpolytope are much more tractable to analyze. In particular, the 48 facets indicated above reduce to the 12 facets 
\begin{equation}\label{eqn:symmetric-Bell-1}
J_j \geq -\frac{S_k}{4} \text{ (for $j=0,1,2,3$ and $k=0,1,2$)}
\end{equation}
of the symmetric hidden variables polytope. Another twelve related facets are $J_j \leq 1 + \frac{S_k}{4}$ ($j=0,1,2,3$ and $k=0,1,2$), but these play no role in our analysis. There are in total 72 additional facets to the symmetric hidden variables polytope that are not facets of the symmetric nonsignaling polytope (and hence are symmetric Bell inequalities). For completeness we present these in Appendix \ref{appendix:symnspoly}, however in addition to those in (\ref{eqn:symmetric-Bell-1}) there only twelve more that are useful here:
\begin{equation}\label{eqn:symmetric-Bell-2}
\begin{array}{ll}
|S_j - S_k| \leq 4(J_j + J_k) & \text{ (for $j,k\in\{0,1,2\}$ distinct)}\\
|S_j - S_k| \leq 4(J_\ell + J_3) & \text{ (for $j,k,\ell\in\{0,1,2\}$ distinct).}
\end{array}
\end{equation}

\begin{proposition}\label{proposition:symm-hidden-variables}
    Among symmetric hidden variables correlations, at most one of the inequalities $J_0,J_1,J_2,J_3 \geq 0$ can be violated. Moreover any such violation satisfies $J_j \geq \max\{-\frac{S_0}{4}, -\frac{S_1}{4}, -\frac{S_2}{4}\}$ and this bound is sharp.
\end{proposition}
\begin{proof}
    From (\ref{eqn:symmetric-Bell-2}) we have pairwise $J_j + J_k \geq 0$. Consequently if one $J_j < 0$, the other three must be positive. The given bound is just (\ref{eqn:symmetric-Bell-1}). One can easily construct a correlation that saturates this bound. For example, taking $U_0 = U_1 = U_2 = 0$, $S_0 = S_1 = S_2 = s$ and $C_{0,1} = C_{0,2} = C_{1,2} = -\frac{1}{3} - \frac{s}{3}$ has then $J_0 = J_1 = J_2 = \frac{1}{3} + \frac{s}{12}$ and $J_3 = -\frac{s}{4}$. All the inequalities (\ref{eqn:symmetric-Bell-1},\ref{eqn:symmetric-Bell-2}) and those in Appendix \ref{appendix:symnspoly} are satisfied.
\end{proof}

\begin{example}
    The correlation matrix indicated in Proposition \ref{proposition:symm-hidden-variables}  that saturates the bounds is
    
    \[\arraycolsep=3pt\def\arraystretch{1}
    \dfrac{1}{12}\left(\begin{array}{rrr|rrr|rrr}
    6-3s & 2 - s  & 2 - s  & 2 - s  & 6-3s & 2 - s  & 2 - s  & 2 - s  & 6-3s\\
    3s   & 4 + s  & 4 + s  & 4 + s  & 3s   & 4 + s  & 4 + s  & 4 + s  & 3s\\
    3s   & 4 + s  & 4 + s  & 4 + s  & 3s   & 4 + s  & 4 + s  & 4 + s  & 3s\\
    6-3s & 2 - s  & 2 - s  & 2 - s  & 6-3s & 2 - s  & 2 - s  & 2 - s  & 6-3s
    \end{array}\right)
    \]
    
    This matrix is realized by the classical hidden variables strategy given by the convex sum: $$\tfrac{s}{4} \left((f_0, f_7) + (f_7, f_0) \right) + \left(\tfrac{1}{6} - \tfrac{s}{12}\right)\left((f_{1}, f_{1}) + \cdots + (f_{6}, f_{6})\right)$$
    For $s = \frac{1}{2}$, this gives us the following correlation matrix which also has $J_3 = -\frac{1}{8}$:
    $$\frac{1}{8}\left(\begin{array}{ccc|ccc|ccc}
    3 & 1 & 1 & 1 & 3 & 1 & 1 & 1 & 3\\
    1 & 3 & 3 & 3 & 1 & 3 & 3 & 3 & 1\\
    1 & 3 & 3 & 3 & 1 & 3 & 3 & 3 & 1\\
    3 & 1 & 1 & 1 & 3 & 1 & 1 & 1 & 3
    \end{array}\right).$$
\end{example}

We indicate that this proposition holds for any hidden variables correlation. Namely, inequalities (\ref{eqn:symmetric-Bell-2}) are also facets of the full hidden variable correlation and analogous bounds to (\ref{eqn:symmetric-Bell-1}) can be derived (the bound for $J_3$ can be seen above).

\end{section}

\begin{section}{The causality loophole}\label{section:causality-loophole}

As we have proven above, the bounds $J_0, J_1, J_2, J_3 \geq -\frac{1}{8}$ are sharp and rigid among quantum synchronous correlations. However, this is not the case for more powerful strategies. For example, we have already seen these inequalities can be violated by general synchronous nonsignaling correlations. If (classical) communication between the two parties is allowed, then further violations can be achieved. This creates a ``causality loophole'' in our system: unless Alice and Bob are acausally separated, then the statistical tests above can be simply simulated using classical communication. Stated another way, if Eve has access to both Alice and Bob's inputs, then she can easily deliver output to each party that perfectly simulates the system above. In particular, whenever Alice and Bob have the same input values then Eve has full knowledge of the key bit she gives them.

\begin{example}
In order to simulate the correlation (\ref{eqn:J3correlation}) that achieves the maximal violation $J_3 = -\frac{1}{8}$ among synchronous quantum correlations, Eve does the following:
\begin{enumerate}
    \item she records Alice's input $x_A$ and delivers a uniformly random bit $y_A = y \in \{0,1\}$ to Alice;
    \item she checks if Bob's input $x_B$ is equal to $x_A$,
    \begin{itemize}
        \item if $x_B = x_A$ then she delivers the same output bit $y_B = y = y_A$ to Bob, or
        \item if $x_B \not= x_A$ then she picks $y_B = y$ with probability $\frac{1}{4}$, and $y_B = 1 - y$ with probability $\frac{3}{4}$, and delivers this output bit to Bob.
    \end{itemize}
\end{enumerate}
Note that Eve needs to communicate $(x_A,y_A)$ to ``Bob's side'' of the protocol to compute his output.
\end{example}

 Here we analyze a twist on the causality loophole: rather than limiting the communication Eve can perform, we assume she has imperfect knowledge of the Alice's and Bob's inputs. In the protocol, these inputs are random bases Alice and Bob independently select to perform a measurement. To be concrete, we will assume Eve's uncertainty is symmetric across all basis sections. In particular, if $\epsilon$ measures Eve's uncertainty, then
$$\mathrm{Pr}\{ \text{Eve guesses basis $x'$}\:|\: \text{Alice (or Bob) selects basis $x$}\} = \left\{\begin{array}{cl} 1 - \epsilon & \text{ when $x' = x$}\\\\ \frac{\epsilon}{2} & \text{ when $x' \not= x$.}\end{array}\right.$$
Eve, having made the guess $(z_A, z_B)$ at the bases used, has unlimited communication and computation and so produces outputs for Alice and Bob $(y_A,y_B)$ according to a correlation of her choosing, which we denote as $\Pr\{ (y_A,y_B) \:|\: (z_A,z_B)\}$. Therefore the correlation that Alice and Bob use for self-testing and deriving shared key is given by
$$p(y_A,y_B\:|\:x_A,x_B) = \sum_{z_A,z_B} \Pr\{ (y_A,y_B) \:|\: (z_A,z_B)\}\cdot
\left\{\begin{array}{cl} 1 - \epsilon & \text{for $z_A = x_A$} \\\\ \frac{\epsilon}{2} & \text{otherwise}\end{array}\right\}\cdot
\left\{\begin{array}{cl} 1 - \epsilon & \text{for $z_B = x_B$} \\\\ \frac{\epsilon}{2} & \text{otherwise}\end{array}\right\}.$$

Owing to the symmetry in Eve's uncertainty several of the observables that Alice and Bob compute have a very simple form. However since Eve is not restricted to nonsignaling strategies, we must use forms suitable for general correlations. See Appendix \ref{appendix:notes} for these. In particular, the the expected asymmetry is given by
\begin{align*}
    \langle A_{0,1} \rangle &= p(0,1\:|\:1,0) - p(1,0\:|\:0,1) + p(1,0\:|\:1,0) - p(0,1\:|\:0,1)\\
    &= \left(1 - 2\epsilon + \tfrac{3}{4}\epsilon^2\right) \tilde{A}_{0,1} +  \left(\tfrac{1}{2}\epsilon - \tfrac{3}{4}\epsilon^2\right) (\tilde{A}_{0,2} - \tilde{A}_{1,2})
\end{align*}
where $\tilde{A}$ are the analogous assymmetry terms for Eve's strategy $\Pr$. Note that when Eve has no knowledge of Alice and Bob's basis selection, $\epsilon = \frac{2}{3}$, then Alice and Bob's received output from Eve is independent of their basis selection and so no asymmetry is observed. However as $\epsilon$ decreases any asymmetries in Eve's strategy start to become evident to Alice and Bob. A similar situation occurs for the biases; for example
\begin{align*}
    \langle B_{0} - B_{1} \rangle &= 2(p(0,0\:|\:0,1) - p(1,1\:|\:0,1) - p(0,0\:|\:1,0) + p(1,1\:|\:1,0))\\
    &= \left(1 - 2\epsilon + \tfrac{3}{4}\epsilon^2\right) \left(\tilde{B}_{0} - \tilde{B}_{1}\right) +  \left(\tfrac{1}{2}\epsilon - \tfrac{3}{4}\epsilon^2\right) \left(\left(\tilde{B}_{0} - \tilde{B}_{2}\right) - \left(\tilde{B}_{1} - \tilde{B}_{2}\right)\right)\\
    &= \left(1 - \tfrac{3}{2}\epsilon\right)\left(\tilde{B}_{0} - \tilde{B}_{1}\right)
\end{align*}

As for the asynchronicity measure $S_0 + S_1 + S_2$, we compute the expected value as follows:
\begin{align}
    \langle S_0 + S_1 + S_2 \rangle &= 2 \left(p(0,1\:|\:0,0) + p(1,0\:|\:0,0) + p(0,1\:|\:1,1) + p(1,0\:|\:1,1) + p(0,1\:|\:2,2) + p(1,0\:|\:2,2)\right) \notag \\
    &= \left(1 - 2\epsilon + \tfrac{3}{2}\epsilon^2\right)\left(\tilde{S}_0 + \tilde{S}_1 + \tilde{S}_2\right) + \left(8\epsilon - 6\epsilon^2\right)(1-\tilde{J}_3) \label{eqn:expectedS}
\end{align}
where $\tilde{S}$ and $\tilde{J}$ are the analogous asynchronicity and $J$ terms, respectively, for Eve's strategy.

\noindent Similarly for $J_3$, we find that the expected value is
\begin{align}
    \langle 1 - J_3 \rangle &= 1 - \frac{1}{4}\left(1 + C_{0,1} + C_{0, 2} + C_{1,2}\right) \notag\\
        &=1 - \tfrac{1}{4}\left( p(0,1\:|\:0,1) + p(1,0\:|\:0,1) + p(0,1\:|\:1,0) + p(1,0\:|\:1,0)\right. \notag\\
        &\qquad\qquad +\ p(0,1\:|\:0,2) + p(1,0\:|\:0,2) + p(0,1\:|\:2,0) + p(1,0\:|\:2,0) \notag \\
        &\qquad\qquad +\ \left.p(0,1\:|\:1,2) + p(1,0\:|\:1,2) + p(0,1\:|\:2,1) + p(1,0\:|\:2,1)\right) \notag\\
    &=  \left(1 - \epsilon + \tfrac{3}{4}\epsilon^2\right)(1-\tilde{J}_3) + \left(\tfrac{1}{4}\epsilon - \tfrac{3}{16}\epsilon^2\right)\left(\tilde{S}_0 + \tilde{S}_1 + \tilde{S_2}\right) \label{eqn:expectedJ0}
\end{align}

\noindent Let $0 \leq \lambda \leq \frac{1}{8}$ and $0 \leq \mu \leq \mu_0$ be the allowed errors in the expected values of $J_3$ and $S_0 + S_1 + S_2$ respectively. Using equations \ref{eqn:expectedS} and \ref{eqn:expectedJ0}, we can compute the values of $\tilde{S} = \tilde{S}_0 + \tilde{S}_1 + \tilde{S}_2$ and $\tilde{J}_3$.

\begin{align*}
    \begin{bmatrix}
        1 - \tilde{J}_3 \\
        \tilde{S}
    \end{bmatrix}
    &=
    \begin{bmatrix*}[r]
        1 - \epsilon + \tfrac{3}{4}\epsilon^2 & \tfrac{1}{4}\epsilon - \tfrac{3}{16}\epsilon^2 \vspace{0.2cm} \\
        8\epsilon - 6\epsilon^2 & 1 - 2\epsilon + \tfrac{3}{2}\epsilon^2
    \end{bmatrix*}^{-1}
    \begin{bmatrix}
        \frac{9}{8} - \lambda \\
        \mu
    \end{bmatrix}
\end{align*}

This gives us solutions,
\begin{align}
    \label{eqn:eve's-J3} \tilde{J}_3 &= 1 - \frac{ (3\epsilon^2 - 4\epsilon)(\mu  - 8\lambda + 9) - 16\lambda + 18}{4\left(3\epsilon - 2\right)^{2}} = \frac{ (3\epsilon^2 - 4\epsilon)(3 - \mu  + 8\lambda) - 16\lambda + 34}{4\left(3\epsilon - 2\right)^{2}}\\
    \label{eqn:eve's-S} \tilde{S} &= \frac{(3\epsilon^2 - 4\epsilon)(\mu - 8\lambda + 9) + 4\mu}{\left(3\epsilon - 2\right)^{2}}.
\end{align}

\noindent We claim that for a maximum value, say $\mu_0$, for Eve's asynchronicity, her uncertainty cannot grow too much before the asynchronicity becomes negative, which will result in an infeasible strategy. We solve for $\epsilon_{max}$ when $\tilde{S} = 0.01$ which gives us $$\epsilon_{max} = \frac{2}{3} - \frac{2}{3}\left(\frac{10\sqrt{6400\lambda^2 + 2(400\lambda - 447)\mu - 200\mu^2 - 14376\lambda + 8073}}{100\mu - 800\lambda + 897}\right)$$


We plot the value of $\epsilon$ against varying values of $\mu_0$ in Figure \ref{fig:asynchronicity_v_epsilon}.  


In a different direction, the analysis above also allows us show robustness of the protocol under the presence of a constant amount of noise, we use two kinds of noise models: (i) a depolarizing channel, and (ii) slew errors in the device measurement angles.

A depolarizing channel acting on state $\varrho$ is defined as
$$\upxi (\varrho) = \varrho' = (1-\eta)\varrho + \frac{\eta}{3}\left(X\varrho X + Y\varrho Y + Z\varrho Z\right).$$
The channel leaves the operator $\rho$ fixed with probability $1-\eta$, and applies one of the Pauli gates $X, Y$ or $Z$ with probability $\frac{\eta}{3}$.
In the protocol, we use an EPR pair, $\ket{\psi} = \frac{1}{\sqrt{2}}\left(\ket{00} + \ket{11}\right)$. Therefore, we have $$\rho = \ket{\psi}\bra{\psi} = \frac{1}{2}\left(\ket{00}\bra{00} + \ket{00}\bra{11} + \ket{11}\bra{00} + \ket{11}\bra{11}\right) = \begin{bmatrix}\frac{1}{2} & 0 & 0 & \frac{1}{2} \\ 0 &0 &0 &0 \\ 0 &0 &0 &0 \\ \frac{1}{2} &0 &0 &\frac{1}{2} \end{bmatrix}$$
Applying the channel independently to Alice's and Bob's qubits, we get,
\begin{align*}
        \rho' = (1-\eta)^2\rho 
         &+ (1-\eta)\frac{\eta}{3} \left( \sum_{\sigma \in \{X,Y,Z\}}{(\mathbb{1}\otimes \sigma)\rho(\mathbb{1}\otimes \sigma)^\dag + (\sigma \otimes \mathbb{1})\rho(\sigma \otimes \mathbb{1})^\dag} \right) \\
         &+\frac{\eta^2}{9} \left( \sum_{\sigma, \sigma' \in \{X, Y, Z\}}{(\sigma \otimes \sigma')\rho(\sigma \otimes \sigma')^\dag} \right)
\end{align*}
Therefore, our probability distribution after applying the channel is given by
$p'(y_A, y_B|x_A, x_B) =\tr(\rho' (E^{x_A}_{y_A} \otimes E^{x_B}_{y_B}))$, where $E^x_y$ are the projections from observables (\ref{eqn:J3measurement}). The expected value of $J_3$ given as follows:\\
\begin{align*}
    \langle J_3' \rangle &= 1 - \frac{1}{4}(p'(0,1|0,1) + p'(0,1|1,0) + p'(1,0|0,1) + p'(1,0|1,0) + p'(0,2|0,1) + p'(0,2|1,0)  \\
    & { } + p'(2,0|0,1) + p'(2,0|1,0) + p'(1,2|0,1) + p'(1,2|1,0) + p'(2,1|0,1) + p'(2,1|1,0))\\
    \therefore \langle J_3' \rangle &= - \frac{1}{8} + \eta -\frac{2}{3}\eta^2
\end{align*}

Similarly, we compute the asynchronicity as $\langle S' \rangle = \langle S_0' + S_1' + S_2' \rangle = 8\eta - \frac{16}{3}\eta^2$. We insert these expected error values into equations (\ref{eqn:eve's-J3},\ref{eqn:eve's-S}) and obtain
\begin{align*}
    \tilde{J}_3 &= 1 - \frac{9(3\epsilon^2 - 4\epsilon) - 16\eta + \frac{32}{3}\eta^2 + 18}{4\left(3\epsilon - 2\right)^{2}}  \\
    \tilde{S} &= \frac{9(3\epsilon^2 - 4\epsilon) + 32\eta - \frac{64}{3}\eta^2}{\left(3\epsilon - 2\right)^{2}}.
\end{align*}
This gives
$$\epsilon_{max} = \frac{2}{3} -\frac{20\sqrt{897}}{2961} \left(\sqrt{16\eta^2 - 24\eta + 9}\right)$$

 Next, we turn to slew errors in measurement angles, and study their effect on the observed value of $J_3$, which Alice and Bob use to detect interference by Eve. Even if the devices that Alice and Bob use are programmed according to their specification, it is unlikely that the devices measure exactly along the angles specified. It is difficult to produce perfect devices, and thus it is likely that we have errors in the measurement angles. In order to account for this type of noise, we simulate the protocol in Microsoft Q\# \cite{qsharp}, and add to the measurement angles Gaussian noise with mean zero and standard deviation 1/8. Table \ref{table:eta-and-J3} summarizes the observed value of $J_3$ over varying values of $\eta$, which is the probability of error for the depolarization channel described above. We observe that the protocol can tolerate even as much as 5\% probability of error($\eta$) in the depolarization channel along with slew errors, while still producing a large $J_3$ violation.
 \newline

\begin{minipage}[b]{.47\textwidth}
    \centering
    \begin{tikzpicture}[y=.7cm, x=1cm]
        \draw (0,0) -- coordinate (x axis mid) (6,0) node[anchor=north west] {$\mu_0$};
        \draw (0,0) -- coordinate (y axis mid) (0,7) node[anchor=south east] {$\epsilon_{max}$};
        \draw (0, 1pt) -- (0, -3pt) node[anchor=north] {$0$};
        \foreach \x in {1, 2, 3, 4, 5}
            \draw (\x, 1pt) -- (\x, -3pt) node[anchor=north] {$0.0\x$};
        \foreach \y in {1, 2, 3, 4, 5, 6}
            \draw (1pt, \y) -- (-3pt, \y) node[anchor=east] {$\y\times 10^{-3}$};
    	\draw plot[mark=*, mark options={fill=magenta}]
    	    file {eps-mu1.data};
    	\begin{scope}[shift={(4,4)}] 
            \draw (-3.5,3.0) node[right]{$\tilde{S} = 1\%, \lambda = \frac{1}{8}$};
	    \end{scope}
    \end{tikzpicture}
    \captionof{figure}{Values of $\mu_0$ vs. $\epsilon_{max}$ for which Eve's asynchronicity $\tilde{S}$ is positive}
    \label{fig:asynchronicity_v_epsilon}
\end{minipage} \qquad
\begin{minipage}[b]{.47\textwidth}
    \centering
    \begin{tabular}{|c|c|c|}
    \hline
      \textbf{ $\eta$ }& \textbf{$J_3$} & \textbf{$J_3$ variance}\\ \hline
      $0.030$ & $-0.097$ & $5.9 \times 10^{-5}$\\ \hline
      $0.032$ & $-0.088$ & $5.92 \times 10^{-5}$\\ \hline
      $0.034$ & $-0.084$ & $5.94 \times 10^{-5}$\\ \hline
      $0.036$ & $-0.078$ & $5.96 \times 10^{-5}$\\ \hline
      $0.038$ & $-0.076$ & $5.97 \times 10^{-5}$\\ \hline
      $0.040$ & $-0.089$ & $5.99 \times 10^{-5}$\\ \hline
      $0.042$ & $-0.076$ & $6.01\times 10^{-5}$\\ \hline
      $0.044$ & $-0.082$ & $6.02 \times 10^{-5}$\\ \hline
      $0.046$ & $-0.083$ & $6.04 \times 10^{-5}$\\ \hline
      $0.048$ & $-0.068$ & $6.06 \times 10^{-5}$\\ \hline
      $0.050$ & $-0.069$ & $6.07 \times 10^{-5}$\\ \hline
    \end{tabular}
    \captionof{table}{Observed $J_3$ statistics over varying values of $\eta$}
    \label{table:eta-and-J3}
\end{minipage}

\end{section}

\appendix

\newpage
\begin{section}{Classical and Quantum Synchronous Correlations}\label{appendix:extreme-points}

Recall a \emph{local hidden variables strategy}, or simply \emph{classical correlation}, is a correlation of the form
    \begin{equation}\label{eqn:classical:definition-appendix}
        p(y_A,y_B\:|\:x_A,x_B) = \sum_{\omega\in\Omega} \mu(\omega) p_A(y_A\:|\: x_A,\omega) p_B(y_B\:|\: x_B,\omega)
    \end{equation}
for some finite set $\Omega$ and probability distribution $\mu$. Here $(\Omega,\mu)$ is shared randomness Alice and Bob may draw on, and $p_A$ and $p_B$ are local (conditional) probabilities they use to produce their respective outputs. Clearly every correlation of the form (\ref{eqn:classical:definition-appendix}) will be nonsignaling. Without loss of generality, we may assume $\mu(\omega) > 0$ for all $\omega\in \Omega$ as otherwise we simply restrict to the support of $\mu$ and still have the same form.

In order that $p$ be synchronous we must have for each $x\in X$, whenever $y_A \not= y_B$ that
$$0 = \sum_{\omega\in\Omega} \mu(\omega) p_A(y_A\:|\: x,\omega) p_B(y_B\:|\: x,\omega).$$
In particular, for each $x\in X$ and $\omega\in \Omega$ we must have whenever $y_A \not= y_B$ that
$$0 = p_A(y_A\:|\: x,\omega) p_B(y_B\:|\: x,\omega).$$
Fix some $x,\omega$. As $\sum_y p_A(y\:|\:x,\omega) = 1$ there exists a $y_0$ (depending on $x,\omega$) such that $p_A(y_0\:|\: x,\omega) > 0$, and so from above $p_B(y\:|\:x,\omega) = 0$ whenever $y \not= y_0$. Thus
$$1 = \sum_y p_B(y\:|\:x,\omega) = p_B(y_0\:|\: x,\omega).$$
Exchanging $A$ and $B$ shows $p_A(y_0\:|\: x,\omega) = 1$ as well.

Therefore, for each $\omega\in\Omega$ we obtain a function $f_\omega:X \to Y$ given by $f_\omega(x) = y_0$ where $y_0$ is the value with $p_A(y_0\:|\: x,\omega) = p_B(y_0\:|\: x,\omega) = 1$. This allows us to map $\Omega$ into the set of function $X\to Y$, proving the following results.

\setcounter{theorem}{2}
\begin{theorem}
The set of synchronous classical correlations with input $X$ and output $Y$ is bijective to the collection of probability distributions on the set of functions $X \to Y$. Given such a probability distribution, the associated strategy is: Alice and Bob sample a function $f:X\to Y$ according the specified distribution, and upon receiving $x_A,x_B$ they output $y_A = f(x_A)$ and $y_B = f(x_B)$.
\end{theorem}

\begin{corollary}
    The extreme points of the synchronous hidden variables strategies from $X$ to $Y$ can be canonically identified with the set of functions $X \to Y$.
\end{corollary}

\begin{corollary}
    Every synchronous classical strategy is symmetric.
\end{corollary}

A \emph{quantum correlation} is a correlation that takes the form
    \begin{equation}\label{eqn:quantum:definition}
        p(y_A,y_B\:|\:x_A,x_B) = \tr(\rho(E^{x_A}_{y_A}\otimes F^{x_B}_{y_B}))
    \end{equation}
where $\rho$ is a density operator on the Hilbert space $\h{H}_A\otimes\h{H}_B$, and for each $x\in X$ we have $\{E^x_y\}_{y\in Y}$ and $\{F^x_y\}_{y\in Y}$ are POVMs on $\h{H}_A$ and $\h{H}_B$ respectively.  Again, any correlation of the form (\ref{eqn:quantum:definition}) will be nonsignaling from the fact that $\{E^x_y\}_{y\in Y}$ and $\{F^x_y\}_{y\in Y}$ are POVMs. We will only treat the case when $\h{H}_A$ and $\h{H}_B$ are finite dimensional. Without loss of generality we can take $\tr_A(\rho)$ and $\tr_B(\rho)$ of maximal rank by restricting $\h{H}_A$ and $\h{H}_B$ if necessary. It is common to define quantum correlations using projections rather than general positive-operator valued measures as one can always enlarge $\h{H}_A$ and $\h{H}_B$ to achieve such. However for synchronous quantum correlations we have these POVMs must already be projection-valued. The proof of this is contained in \cite[Proposition 1]{cameron2007quantum}, which is for ``quantum coloring games'' but carries over to synchronous quantum correlations without modification; see also \cite{abramsky2017quantum,atserias2016quantum,manvcinska2016quantum,paulsen2016estimating}.

\begin{lemma}
    Let $p(y_A,y_B\:|\:x_A,x_B) = \tr(\rho(E^{x_A}_{y_A}\otimes F^{x_B}_{y_B}))$ be a synchronous quantum correlation. Then the POVMs $\{E^x_y\}_{y\in Y}$ and $\{F^x_y\}_{y\in Y}$, for $x\in X$, are projection-valued measures. Moreover each $E^x_y$ commutes with $\tr_B(\rho)$ and each $F^x_y$ commutes with $\tr_A(\rho)$.
\end{lemma}

The works cited above are primarily focused on the existence of a synchronous quantum correlation that satisfies some additional conditions, for example preserving graph adjacency. In this context, a common result is that if one such correlation exists then another exists whose state is maximally entangled; examples of such include \cite[Proposition 1]{cameron2007quantum}, \cite[Lemma 4]{abramsky2017quantum}, \cite[Theorem 2.1]{manvcinska2016quantum}. It is certainly not the case that every synchronous quantum correlation can be taken to have a maximally entangled state, as these include hidden variables strategies. Nonetheless we can prove that every synchronous quantum correlation is a convex sum of such. 

\begin{lemma}
    Every synchronous quantum correlation can be expressed as the convex combination of synchronous quantum correlations with maximally entangled pure states. In particular, if a synchronous quantum correlation $\tr(\rho(E^{x_A}_{y_A}\otimes F^{x_B}_{y_B}))$ is extremal then we may take $\rho = \ket\psi\bra\psi$ with $\ket\psi$ maximally entangled.
\end{lemma}
\begin{proof}
    Let $p(y_A,y_B\:|\:x_A,x_B) = \tr(\rho(E^{x_A}_{y_A}\otimes F^{x_B}_{y_B}))$ be a synchronous quantum correlation. As we may decompose $\rho$ into a convex combination of pure states, we can assume $\rho = \ket\psi\bra\psi$. Suppose $\ket\psi$ has $r$ distinct Schmidt coefficients, and in particular let us write the Schmidt decomposition of $\ket\psi$ as
    $$\ket\psi = \sum_{j=1}^r \sqrt{\sigma_j}\sum_{m=1}^{\ell_j} \ket{\phi_{j,m}^A}\otimes\ket{\phi_{j,m}^B}.$$
    Note $\sum_j\ell_j\sigma_j = 1$. The spectral decomposition of the partial trace is then
    $$\tr_B(\ket\psi\bra\psi) = \sum_{j=1}^r \sigma_j \Pi_j^A,$$
    where
    $$\Pi_j^A = \sum_{m=1}^{\ell_j} \ket{\phi_{j,m}^A}\bra{\phi_{j,m}^A}.$$
    These eigenprojections $\{\Pi_j^A\}$ decompose Alice's Hilbert space $\h{H}_A$ into an orthogonal sum of subspaces: $\h{H}_A = \h{H}_0 \oplus \bigoplus \h{H}_j^A$ where $\h{H}_j^A = \mathrm{im}(\Pi_j^A)$ and $\h{H}_0^A = \ker(\tr_B(\ket\psi\bra\psi))$. Identically $\tr_A(\ket\psi\bra\psi) = \sum_{j=1}^r \sigma_j \Pi_j^B$, inducing a decomposition of Bob's space $\h{H}_B = \h{H}_0 \oplus \bigoplus \h{H}_j^B$.
    
    From the lemma, each $E^x_y$ commutes with $\tr_A(\ket\psi\bra\psi)$ and so preserves the decomposition $\h{H}_A = \h{H}_0 \oplus \bigoplus \h{H}_j^A$, with a similar statement holding for the $F^x_y$. Therefore
    \begin{align*}
        &p(y_A,y_B\:|\:x_A,x_B) = \bra\psi E^{x_A}_{y_A}\otimes F^{x_B}_{y_B}\ket\psi\\
        &= \sum_{j,k=1}^r \sqrt{\sigma_j\sigma_k} \sum_{m,n=1}^{\ell_j,\ell_k} \bra{\phi_{j,m}^A} E^{x_A}_{y_A} \ket{\phi_{k,n}^A}\bra{\phi_{j,m}^B} F^{x_B}_{y_B} \ket{\phi_{k,n}^B}\\
        &= \sum_{j=1}^r \sigma_j \sum_{m,n=1}^{\ell_j,\ell_k} \bra{\phi_{j,m}^A} E^{x_A}_{y_A} \ket{\phi_{j,n}^A}\bra{\phi_{j,m}^B} F^{x_B}_{y_B} \ket{\phi_{j,n}^B}\\
        &= \sum_{j=1}^r \ell_j\sigma_j \bra{\psi_j} E^{x_A}_{y_A}\otimes F^{x_B}_{y_B} \ket{\psi_j},
    \end{align*}
    where $\ket{\psi_j} = \frac{1}{\sqrt{\ell_j}} \sum_{m=1}^{\ell_j} \ket{\phi_{j,m}^A}\otimes\ket{\phi_{j,m}^B}$ is maximally entangled on $\h{H}_j^A\otimes\h{H}_j^B$.
\end{proof}

This theorem shows that any extremal synchronous quantum correlation will be associated to some maximally entangled pure state, which can be taken canonically \cite{werner2001all}. After restricting $\h{H}_A$ and $\h{H}_B$ to the support of the partial traces of $\ket\psi\bra\psi$ if necessary, we can take
$$\ket\psi = (V\times\openone)\ket\Omega \text{ where } \ket\Omega = \frac{1}{\sqrt{d}} \sum_{j=1}^d \ket{j,j}.$$
Here $\{\ket{j}\}_{j=1}^d$ is a fixed orthonormal basis of $\h{H} = \h{H}_B$ and $V$ is an isometry of $\h{H}_A$ onto $\h{H}_B$, or unitary upon also identifying $\h{H}_A = \h{H}$. Direct manipulation of the resulting expression leads to the following result, which can be found in greater generality as \cite[Theorem 5.5]{paulsen2016estimating}. 

\begin{theorem}
Let $X,Y$ be finite sets, $\h{H}$ a $d$-dimensional Hilbert space, and for each $x\in X$ a projection-valued measure $\{E^x_y\}_{y\in Y}$ on $\h{H}$. Then
$$p(y_A,y_B\:|\: x_A,x_B) = \frac{1}{d}\tr(E^{x_A}_{y_A}E^{x_B}_{y_B})$$
defines a synchronous quantum correlation. Moreover every synchronous quantum correlation with maximally entangled pure state has this form.
\end{theorem}
\begin{proof}
Suppose $p(y_A,y_B\:|\: x_A,x_B) = \bra{\psi}(E^{x_A}_{y_A}\otimes F^{x_B}_{y_B})\ket{\psi}$ with $\ket\psi$ maximally entangled. Then as noted before  $\ket\psi = (V\times\openone)\ket\Omega$ where $\ket\Omega = \frac{1}{\sqrt{d}} \sum_{j=1}^d \ket{j,j}$ with  $\{\ket{j}\}_{j=1}^d$ a fixed orthonormal basis of $\h{H} = \h{H}_B$ and $V$ is an isometry of $\h{H}_A$ onto $\h{H}_B$ (or unitary if we identify $\h{H}_A = \h{H}$). So we have
$$p(y_A,y_B\:|\: x_A,x_B) = \bra{\Omega}(V^\dagger E^{x_A}_{y_A}V\otimes F^{x_B}_{y_B})\ket{\Omega}$$

Redefining $V^\dagger E^x_y V \mapsto E^x_y$ reduces our form for correlations with maximally entangled state to
\begin{align*}
    p(y_A,y_B\:|\: x_A,x_B) &= \bra\Omega E^{x_A}_{y_A}\otimes F^{x_B}_{y_B} \ket\Omega\\
    &= \frac{1}{d} \sum_{j,k=1}^d \bra{j} E^{x_A}_{y_A} \ket{k}\bra{j} F^{x_B}_{y_B} \ket{k}\\
    &= \frac{1}{d} \sum_{j,k=1}^d \bra{j} E^{x_A}_{y_A} \ket{k}\bra{k} \overline{F^{x_B}_{y_B}} \ket{j}\\
    &= \frac{1}{d} \tr(E^{x_A}_{y_A} \overline{F^{x_B}_{y_B}}).
\end{align*}
Here $\overline{F^{x}_{y}}$ refers to the projection whose entries in the $\{\ket{j}\}_{j=1}^d$ basis are the complex conjugates of those of $F^{x}_{y}$; that is $\overline{F^{x}_{y}}$ is the transpose of $F^{x}_{y}$ with respect to this basis.

As $p$ is synchronous we have
$$1 = \frac{1}{d} \sum_{y_A,y_B} \tr(E^{x}_{y_A} \overline{F^{x}_{y_B}}) = \frac{1}{d} \sum_{y} \tr(E^{x}_{y} \overline{F^{x}_{y}}).$$ 
But from Cauchy-Schwarz,
\begin{align*}
    1 &= \frac{1}{d} \sum_{y} \tr(E^{x}_{y} \overline{F^{x}_{y}})\\
    &\leq \left[\frac{1}{d} \sum_{y} \tr(E^{x}_{y} E^{x}_{y})\right]^{\frac{1}{2}}\left[\frac{1}{d} \sum_{y} \tr(\overline{F^{x}_{y}} \overline{F^{x}_{y}})\right]^{\frac{1}{2}}\\
    &= \left[\tfrac{1}{d} \tr(\openone)\right]^{\frac{1}{2}}\left[\tfrac{1}{d} \tr(\openone)\right]^{\frac{1}{2}} = 1.
\end{align*}
And so again by Cauchy-Schwarz $E^{x}_{y} = \overline{F^{x}_{y}}$. 

As for the converse, synchronicity is clear as $\frac{1}{d}\tr(E^x_{y_A}E^x_{y_B}) = 0$ if $y_A \not= y_B$ as $\{E^x_{y}\}_{y\in Y}$ is a projection-valued measure. To see it is quantum, we reverse the computation above and write
$$\frac{1}{d}\tr(E^x_{y_A}E^x_{y_B}) = \bra\Omega E^{x_A}_{y_A}\otimes \overline{E}^{x_B}_{y_B} \ket\Omega.$$
\end{proof}

\begin{corollary}
    Every synchronous quantum correlation is symmetric.
\end{corollary}

\end{section}

\begin{section}{Synchronous correlations with \texorpdfstring{$|X| = 2$}{Lg}}

Let us consider with the case of correlations with $|X|=2$ into some finite set $Y$. For concreteness we take $X = \{0,1\}$. In the case of general (not necessarily synchronous) nonsignaling correlations, some results along these lines are known \cite{barrett2005nonlocal}. However synchronous correlations have a very rigid structure, which allows us to characterize them completely.  In the first lemma, we show that a general synchronous nonsignalling correlation is characterized by two interrelated functions $u(y_A,y_B),\ v(y_A,y_B)$ on $Y^2$. The second lemma shows that the local hidden variables correlations have the same structure with $u(y_A,y_B) = v(y_B,y_A)$. Therefore, in the context of synchronous nonsignalling correlations, symmetric and classical are equivalent. As all synchronous quantum correlations are symmetric, there can be no quantum correlation that are not classical, and hence no synchronous Bell inequalities when $X = \{0,1\}$. 

\setcounter{theorem}{17}
\begin{lemma}\label{lemma:2-point-domain:nonsignaling}
    Let $Y$ be a finite set and $u = u(y_A,y_B)$ and $v = v(y_A,y_B)$ be probability distributions on $Y^2$ such that for all $y \in Y$
    $$\sum_{y'} u(y,y') = \sum_{y'} v(y',y) \text{ and }\sum_{y'} u(y',y) = \sum_{y'} v(y,y').$$
    Write $\theta(y)$ and $\phi(y)$ for these two sums respectively and define
    $$\begin{array}{rl@{\qquad}rl}
        p(y_A,y_B\:|\:0,0) &= \indicator{y_A=y_B}\theta(y_A), & p(y_A,y_B\:|\:0,1) &= u(y_A,y_B),\\
        p(y_A,y_B\:|\:1,1) &= \indicator{y_A=y_B}\phi(y_A), & p(y_A,y_B\:|\:1,0) &= v(y_A,y_B).\\
    \end{array}$$
    Then $p$ is a synchronous nonsignaling correlation. Moreover every nonsignaling correlation with domain $\{0,1\}$ arises this way.
\end{lemma}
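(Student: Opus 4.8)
The plan is to prove the forward direction by directly unwinding the definitions, and then to obtain the converse by reading the same computation in reverse. For the forward direction I would first check that each of the four arrays $p(y_A,y_B\:|\:x_A,x_B)$ is a probability distribution on $Y^2$: the cases $(x_A,x_B)\in\{(0,1),(1,0)\}$ hold since $u$ and $v$ are distributions by hypothesis, and for $(0,0),(1,1)$ one uses $\theta,\phi\ge 0$ (they are sums of nonnegative numbers) together with $\sum_y\theta(y)=\sum_{y,y'}u(y,y')=1$ and $\sum_y\phi(y)=\sum_{y,y'}u(y',y)=1$. Synchronousness is immediate from the definition, since $p(y_A,y_B\:|\:0,0)$ and $p(y_A,y_B\:|\:1,1)$ carry the factor $\indicator{y_A=y_B}$. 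The only real content is nonsignaling, and over a two-element domain this reduces to four families of identities indexed by $y\in Y$, namely
\[
\sum_{y_B}p(y,y_B\:|\:0,0)=\theta(y)=\sum_{y_B}p(y,y_B\:|\:0,1),\qquad \sum_{y_B}p(y,y_B\:|\:1,0)=\phi(y)=\sum_{y_B}p(y,y_B\:|\:1,1),
\]
together with the two analogous identities obtained by summing over the first output instead. Each equality unwinds to one of the defining relations for $\theta$ and $\phi$ --- for instance $\sum_{y_B}p(y,y_B\:|\:0,1)=\sum_{y'}u(y,y')=\theta(y)$ and $\sum_{y_A}p(y_A,y\:|\:1,0)=\sum_{y'}v(y',y)=\theta(y)$ --- so hypotheses (1)--(2) are exactly what is needed.

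For the converse, start from a synchronous nonsignaling correlation $p$ with domain $\{0,1\}$ and outputs in $Y$. Put $u(y_A,y_B):=p(y_A,y_B\:|\:0,1)$ and $v(y_A,y_B):=p(y_A,y_B\:|\:1,0)$, which are distributions on $Y^2$. Synchronousness forces $p(y_A,y_B\:|\:0,0)=\indicator{y_A=y_B}\theta(y_A)$ and $p(y_A,y_B\:|\:1,1)=\indicator{y_A=y_B}\phi(y_A)$ with $\theta(y):=p(y,y\:|\:0,0)$ and $\phi(y):=p(y,y\:|\:1,1)$; since the off-diagonal entries vanish, $\theta(y)$ equals both $\sum_{y_B}p(y,y_B\:|\:0,0)$ and $\sum_{y_A}p(y_A,y\:|\:0,0)$, and similarly for $\phi$. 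Now applying nonsignaling: equating the marginal of $p(\cdot\:|\:0,0)$ over its second output with that of $p(\cdot\:|\:0,1)$ gives $\theta(y)=\sum_{y'}u(y,y')$, and equating the marginal over the first output with that of $p(\cdot\:|\:1,0)$ gives $\theta(y)=\sum_{y'}v(y',y)$; this is hypothesis (1). The same two steps carried out at the input $1$ give $\phi(y)=\sum_{y'}u(y',y)=\sum_{y'}v(y,y')$, hypothesis (2). Hence $u,v$ are compatible in the required sense, and by construction $p$ is exactly the correlation they produce through the displayed formulas.

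The argument is essentially bookkeeping rather than anything deep, so I do not expect a genuine obstacle; the one place to stay careful is matching each of the four nonsignaling identities to the correct sum and index position, since $\theta$ appears simultaneously as the first-coordinate marginal of $u$ and the second-coordinate marginal of $v$, while $\phi$ appears the other way around.
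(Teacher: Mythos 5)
Your proof is correct and follows essentially the same approach as the paper: the forward direction is the same straightforward verification (which the paper leaves to the reader), and in the converse you define $u,v$ from the correlation, use synchronousness to pin down the diagonal blocks, and read off conditions (1) and (2) from the nonsignaling equalities, just as the paper does.
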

\begin{proof}
    A straightforward computation shows that $p$ as defined is synchronous and satisfies the nonsignaling conditions. Conversely, given a synchronous nonsignaling correlation $p$ from $\{0,1\}$ to $Y$, define 
    $$u(y_A,y_B) = p(y_A,y_B\:|\:0,1) \text{ and } v(y_A,y_B) = p(y_A,y_B\:|\:1,0).$$
    Then
    $$\sum_{y'} p(y,y'\:|\:0,0) = \sum_{y'} p(y,y'\:|\:0,1) = \sum_{y'} u(y,y') = \theta(y)$$
    where we take this as the definition of $\theta$. Then $p(y,y'\:|\:0,0) = 0$ when $y\not= y'$ and
    $$p(y,y\:|\:0,0) = \sum_{y'}p(y,y'\:|\:0,0) = \theta(y)$$
    and therefore $p(y,y'\:|\:0,0) = \indicator{y=y'}\theta(y)$. An identical argument shows $p(y,y'\:|\:1,1) = \indicator{y=y'}\phi(y)$ where $\phi(y) = \sum_{y'}v(y,y')$. Finally, we compute
    \begin{align*}
        \sum_{y'} u(y,y') &= \sum_{y'} p(y,y'\:|\:0,1) = \sum_{y'} p(y,y'\:|\:0,0) = p(y,y\:|\: 0,0)\\ 
        &= \sum_{y'} p(y',y\:|\:0,0) = \sum_{y'} p(y',y\:|\:1,0) = \sum_{y'} v(y',y).
    \end{align*}
    Again a similar argument shows $\sum_{y'} u(y',y) = \sum_{y'} v(y,y')$, and therefore $u$ and $v$ satisfy the two constraints stated in the theorem.
\end{proof}

\begin{lemma}\label{lemma:2-point-domain:classical}
    Let $Y$ be a finite set and $u = u(y_A,y_B)$ be probability distributions on $Y^2$. Write
        $$\theta(y) = \sum_{y'} u(y,y') \text{ and } \phi(y) = \sum_{y'} u(y',y).$$
    Define
    $$\begin{array}{rl@{\qquad}rl}
        p(y_A,y_B\:|\:0,0) &= \indicator{y_A=y_B}\theta(y_A), & p(y_A,y_B\:|\:0,1) &= u(y_A,y_B),\\
        p(y_A,y_B\:|\:1,1) &= \indicator{y_A=y_B}\phi(y_A), & p(y_A,y_B\:|\:1,0) &= u(y_A,y_B).\\
    \end{array}$$
    Then $p$ is a synchronous classical correlation. Moreover every nonsignaling correlation with domain $\{0,1\}$ arises this way.
\end{lemma}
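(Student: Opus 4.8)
The plan is first to pin down the final sentence. Taken verbatim with the word ``nonsignaling'' it is too strong: the correlation $p$ produced by the construction is manifestly symmetric, $p(y_A,y_B\:|\:x_A,x_B)=p(y_B,y_A\:|\:x_B,x_A)$ (as it must be, by Corollary~\ref{corollary:classical:symmetric}), whereas by Lemma~\ref{lemma:2-point-domain:nonsignaling} a generic synchronous nonsignaling correlation with domain $\{0,1\}$ carries an independently chosen ``$v$-block''. The claim that is actually true---and surely intended, matching the first sentence of the statement and paralleling Lemma~\ref{lemma:2-point-domain:nonsignaling}---is that every synchronous \emph{classical} correlation with domain $\{0,1\}$ arises this way; I would prove that, and then record exactly which part of the literal ``nonsignaling'' version does survive. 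Both directions reduce to Theorem~\ref{theorem:classical:characterization}, which identifies synchronous classical correlations on $\{0,1\}\to Y$ with probability distributions on $Y^2$ (a function $\{0,1\}\to Y$ being the same as a pair in $Y^2$).

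For the forward direction I would observe that $u$ is itself a probability distribution on $Y^2$ and feed it into the construction of Theorem~\ref{theorem:classical:characterization}: identify $f\in Y^{\{0,1\}}$ with $(f(0),f(1))\in Y^2$, sample $f$ with probability $u(f(0),f(1))$, and have Alice and Bob output $f(x_A)$ and $f(x_B)$. This gives the correlation $\sum_{f}u(f(0),f(1))\,\indicator{f(x_A)=y_A}\indicator{f(x_B)=y_B}$, and evaluating the four inputs $(x_A,x_B)\in\{0,1\}^2$ returns $\indicator{y_A=y_B}\theta(y_A)$, $u(y_A,y_B)$, $u(y_B,y_A)$, and $\indicator{y_A=y_B}\phi(y_A)$ respectively---precisely the displayed $p$. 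So $p$ is synchronous classical. (Equivalently one can name the hidden-variables data of~(\ref{eqn:classical:definition}) outright: $\Omega=Y^2$, $\mu=u$, and $p_A(y\:|\:x,\omega)=p_B(y\:|\:x,\omega)=\indicator{y=\omega_x}$.)

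For the converse I would take an arbitrary synchronous classical correlation $p$ from $\{0,1\}$ to $Y$; by Theorem~\ref{theorem:classical:characterization} there is a probability distribution $\nu$ on $Y^2$ with $p(y_A,y_B\:|\:x_A,x_B)=\sum_{f}\nu(f)\,\indicator{f(x_A)=y_A}\indicator{f(x_B)=y_B}$. Reading off the $(0,1)$ input shows that the prescribed $u(y_A,y_B):=p(y_A,y_B\:|\:0,1)$ is exactly $\nu$; then $\theta(y)=\sum_{y'}u(y,y')=\sum_{f:\,f(0)=y}\nu(f)$ and $\phi(y)=\sum_{y'}u(y',y)=\sum_{f:\,f(1)=y}\nu(f)$, and substituting these back into the four-input evaluation above recovers the stated formulas for $p$. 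For the literal ``nonsignaling'' wording restricted to symmetric correlations, I would note that symmetry forces $p(y_A,y_B\:|\:1,0)=p(y_B,y_A\:|\:0,1)=u(y_B,y_A)$ and makes both constraints of Lemma~\ref{lemma:2-point-domain:nonsignaling} automatic, so every symmetric synchronous nonsignaling correlation with domain $\{0,1\}$ already has this form---and is therefore classical by the forward direction.

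The only actual computations are the four-input evaluation and the two marginal identities defining $\theta$ and $\phi$, so I expect no real obstacle; the one point that needs care is the scope of the final sentence, which I handle by proving the classical version and separating out the symmetric fragment of the nonsignaling one.
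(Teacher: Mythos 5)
Your proof is correct, and the forward direction is exactly the paper's: take $\mu(f)=u(f(0),f(1))$ on $Y^{\{0,1\}}$ and check the four input pairs. You are also right that the final sentence as literally written (``every nonsignaling correlation'') is too strong and must mean ``every synchronous \emph{classical} correlation''; the paper's own proof tacitly makes this correction, and your counterexample reasoning (an asymmetric $v$-block in Lemma~\ref{lemma:2-point-domain:nonsignaling}) is the right justification. The only divergence is in the converse: the paper routes through Lemma~\ref{lemma:2-point-domain:nonsignaling} plus Corollary~\ref{corollary:classical:symmetric} (classical $\Rightarrow$ symmetric $\Rightarrow$ $v(y_A,y_B)=u(y_B,y_A)$), whereas you invoke Theorem~\ref{theorem:classical:characterization} directly and read off $u=\nu$ from the $(0,1)$ input; both are one-line arguments, and your added remark that every \emph{symmetric} synchronous nonsignaling correlation already has this form is precisely the content of the corollary the paper states immediately afterward.
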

\begin{proof}
    Define a probability distribution on the functions $\{0,1\} \to Y$ by $\mu(f) = u(f(0),f(1))$. Then
    $$\sum_f u(f(0),f(1)) \indicator{y_A = f(0)}\indicator{y_B = f(1)} = u(y_A,y_B) = p(y_A,y_B\:|\:0,1)$$
    and identically 
    $$p(y_A,y_B\:|\:1,0) = \sum_f u(f(0),f(1)) \indicator{y_A = f(1)}\indicator{y_B = f(0)}.$$
    Similar computation show
    $$\sum_f u(f(0),f(1)) \indicator{y_A = f(0)}\indicator{y_B = f(0)} = \indicator{y_A=y_B}\sum_{y'} u(y_A,y') =  p(y_A,y_B\:|\:0,0)$$
    and
    $$p(y_A,y_B\:|\:1,1) = \sum_f u(f(0),f(1)) \indicator{y_A = f(1)}\indicator{y_B = f(1)}.$$    Therefore $p$ is the classical strategy associated to the distribution $\mu$.
    
    Conversely, if $p$ is classical then it is nonsignaling and so has the form as given in Lemma \ref{lemma:2-point-domain:nonsignaling}. But from Corollary \ref{corollary:classical:symmetric}, $p$ is also symmetric and hence $v(y_A,y_B) = u(y_B,y_A)$.
\end{proof}

\begin{corollary}
    A synchronous correlation with $|X| = 2$ is classical if and only if it is symmetric.
\end{corollary}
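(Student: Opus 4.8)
The plan is to obtain both implications directly from the two preceding lemmas, which already carry essentially all of the structural content; the corollary is then just a matter of matching parametrizations. One caveat worth recording at the outset: the statement must be read with the paper's standing convention that ``correlation'' means \emph{nonsignaling} correlation. This genuinely matters for the backward direction, since a classical correlation is always nonsignaling, whereas one can easily write down a symmetric synchronous correlation that violates the nonsignaling relations; such a correlation is symmetric but not classical, so the nonsignaling hypothesis cannot be dropped.

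For the forward direction I would simply cite Corollary \ref{corollary:classical:symmetric}: every synchronous classical strategy is symmetric, with no restriction on the domain, so in particular this holds for strategies with domain $\{0,1\}$.

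For the backward direction, suppose $p$ is a symmetric synchronous (nonsignaling) correlation from $\{0,1\}$ to a finite set $Y$. First invoke Lemma \ref{lemma:2-point-domain:nonsignaling} to put $p$ in the normal form determined by $u(y_A,y_B) = p(y_A,y_B\:|\:0,1)$ and $v(y_A,y_B) = p(y_A,y_B\:|\:1,0)$, with $u,v$ satisfying the two marginal-matching constraints of that lemma. Then apply symmetry of $p$ with $(x_A,x_B) = (1,0)$ to get $v(y_A,y_B) = p(y_A,y_B\:|\:1,0) = p(y_B,y_A\:|\:0,1) = u(y_B,y_A)$. Substituting $v(y_A,y_B) = u(y_B,y_A)$ into the Lemma \ref{lemma:2-point-domain:nonsignaling} normal form yields precisely the Lemma \ref{lemma:2-point-domain:classical} normal form: the functions $\theta(y) = \sum_{y'} u(y,y')$ and $\phi(y) = \sum_{y'} u(y',y)$ used in Lemma \ref{lemma:2-point-domain:classical} agree with the sums $\theta,\phi$ of Lemma \ref{lemma:2-point-domain:nonsignaling}, because constraint (1) identifies $\sum_{y'} v(y',y)$ with $\sum_{y'} u(y,y')$ and constraint (2) identifies $\sum_{y'} v(y,y')$ with $\sum_{y'} u(y',y)$. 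Lemma \ref{lemma:2-point-domain:classical} then exhibits $p$ as the synchronous classical strategy associated to the distribution $\mu(f) = u(f(0),f(1))$ on $Y^{\{0,1\}}$, which is what we wanted.

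There is no real obstacle here, only bookkeeping: the one point to verify carefully is that imposing $v(y_A,y_B) = u(y_B,y_A)$ on the Lemma \ref{lemma:2-point-domain:nonsignaling} parametrization lands one exactly in the Lemma \ref{lemma:2-point-domain:classical} parametrization — in particular that no new constraint on $u$ is created and that the two boundary blocks $p(\cdot\:|\:0,0)$ and $p(\cdot\:|\:1,1)$ coincide block-by-block. It is also worth remarking that the two directions are asymmetric in their use of hypotheses: the forward implication uses nothing about the domain, while the backward implication genuinely relies on the two-element domain through Lemmas \ref{lemma:2-point-domain:nonsignaling} and \ref{lemma:2-point-domain:classical}.
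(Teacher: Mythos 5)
Your proof is correct and follows exactly the route the paper intends: the forward direction is Corollary~\ref{corollary:classical:symmetric}, and the backward direction combines Lemma~\ref{lemma:2-point-domain:nonsignaling} with the symmetry constraint $v(y_A,y_B)=u(y_B,y_A)$ to land in the parametrization of Lemma~\ref{lemma:2-point-domain:classical}. Your observation that ``correlation'' must be read as ``nonsignaling correlation'' is a genuine and necessary clarification; the paper leaves it implicit, and without it the backward direction would be false.
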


\begin{theorem}
    Every synchronous quantum correlation with $|X| = 2$ to any finite set is classical.
\end{theorem}

These results imply that for synchronous correlations with $|X| = 2$ the Bell inequalities are actually equations: the equations for a correlation being symmetric. Consequently, one cannot achieve a Bell violation in this circumstance since all synchronous quantum correlations will also satisfy these equations.

\end{section}

\begin{section}{Example: \texorpdfstring{$|X| = |Y| = 2$}{Lg}} \label{appendix:case2-2}

In this appendix we explicitly express several polytopes of nonlocal games using the biases and correlation matrix above, and focus on the case of $|X| = |Y| = 2$. Starting with the nonsignaling polytope, we have seen in (\ref{eqn:nonsignaling-inequalities}) above that this can be described as the inequalities 
\begin{equation}
    \begin{array}{rl}
    1 + a_{j} + b_{k} + c_{j,k} &\geq 0,\\
    1 - a_{j} - b_{k} + c_{j,k} &\geq 0,\\
    1 - a_{j} + b_{k} - c_{j,k} &\geq 0,\\
    1 + a_{j} - b_{k} - c_{j,k} &\geq 0,
    \end{array}
\end{equation}
with indices ranging $j,k = 0,1$. This defines an 8-dimensional polyhedron in $\mathbb{R}^8$. For completeness we write these out explicitly:
\begin{align*}
1 + a_{0} + b_{0} + c_{0,0} \geq 0&,\quad 1 - a_{0} - b_{0} + c_{0,0} \geq 0,\\
1 - a_{0} + b_{0} - c_{0,0} \geq 0&,\quad 1 + a_{0} - b_{0} - c_{0,0} \geq 0,\\
1 + a_{0} + b_{1} + c_{0,1} \geq 0&,\quad 1 - a_{0} - b_{1} + c_{0,1} \geq 0,\\
1 - a_{0} + b_{1} - c_{0,1} \geq 0&,\quad 1 + a_{0} - b_{1} - c_{0,1} \geq 0,\\
1 + a_{1} + b_{0} + c_{1,0} \geq 0&,\quad 1 - a_{1} - b_{0} + c_{1,0} \geq 0,\\
1 - a_{1} + b_{0} - c_{1,0} \geq 0&,\quad 1 + a_{1} - b_{0} - c_{1,0} \geq 0,\\
1 + a_{1} + b_{1} + c_{1,1} \geq 0&,\quad 1 - a_{1} - b_{1} + c_{1,1} \geq 0,\\
1 - a_{1} + b_{1} - c_{1,1} \geq 0&,\quad 1 + a_{1} - b_{1} - c_{1,1} \geq 0.
\end{align*}

The hidden variables polytope is an 8-dimensional subpolyhedron of the nonsignaling polytope defined as the convex hull of 16 vertices: the 16 pairs of functions $(f_A,f_B)$ where $f_A,f_B:\{0,1\} \to \{0,1\}$. Its facets, excluding the ones from the nonsignaling polytope above are:
\begin{equation}\label{eqn:CHSH}
\begin{array}{rcl}
2 &\geq c_{0,0} + c_{0,1} - c_{1,0} + c_{1,1} &\geq -2,\\
2 &\geq c_{0,0} - c_{0,1} + c_{1,0} + c_{1,1} &\geq -2,\\
2 &\geq c_{0,0} + c_{0,1} + c_{1,0} - c_{1,1} &\geq -2,\\
2 &\geq c_{0,0} - c_{0,1} - c_{1,0} - c_{1,1} &\geq -2.
\end{array}
\end{equation}
The second of these is considered the ``usual'' form for the CHSH inequalities, although they are all equivalent under natural symmetries.

From Lemma \ref{lemma:symmetric-reduction}, the facets of the symmetric nonsignaling polytope are obtained from those of the full nonsignaling polytope by setting $a_j = b_j$, and $c_{0,1} = c_{1,0}$. This produces a 5-dimensional polyhedron in $\mathbb{R}^8$ defined by
\begin{align*}
    1 + a_{j} + a_{k} + c_{j,k} &\geq 0\\
    1 - a_{j} - a_{k} + c_{j,k} &\geq 0\\
    1 - a_{j} + a_{k} - c_{j,k} &\geq 0\\
    1 + a_{j} - a_{k} - c_{j,k} &\geq 0.
\end{align*}
Separating out the the cases $j = k$ versus $j < k$ produces
\begin{align*}
    1 + 2 a_{j} + c_{j,j} \geq 0,&\quad 1 + a_{j} + a_{k} + c_{j,k} \geq 0,\\
    1 - 2 a_{j} + c_{j,j} \geq 0,&\quad 1 - a_{j} - a_{k} + c_{j,k} \geq 0,\\
    1  - c_{j,j} \geq 0, &\quad 1 - a_{j} + a_{k} - c_{j,k} \geq 0,\\
    &\quad 1 + a_{j} - a_{k} - c_{j,k} \geq 0.
\end{align*}

Lemma \ref{lemma:symmetric-reduction} does not directly apply to the symmetric hidden variables polytope. While this polytope is contained in the symmetric nonsignaling polytope, and so satisfies $a_j = b_j$ and $c_{0,1} = c_{1,0}$, there is no reason to believe that it can be generated by intersecting the facets of the full hidden variables polytope with the space defined by these inequalities. Nonetheless, we will see this is precisely the case (after a fashion). The symmetric hidden variables polytope is defined as the convex hull of 10 vertices. Four of these are given by Alice and Bob selecting a strategy based on the same function $(f_A,f_B) = (f_j,f_j)$. The other six arise from a randomized strategy: for each pair of distinct functions $(f_j,f_k)$ with $j\not= k$, with probability $\frac{1}{2}$ they choose $(f_A,f_B) = (f_j,f_k)$ and otherwise choose $(f_A,f_B) = (f_k,f_j)$. This defines a 5-dimensional polyhedron in $\mathbb{R}^8$ whose facets (excluding the equations above) are
$$\begin{array}{rrr}
1 + 2a_{0} + c_{0,0} \geq 0, &\quad 1 + a_{0} + a_{1} + c_{0,1} \geq 0, &\quad 2 + c_{0,0} + 2c_{0,1} - c_{1,1} \geq 0,\\
1 + 2a_{1} + c_{1,1} \geq 0, &\quad 1 - a_{0} - a_{1} + c_{0,1} \geq 0, &\quad 2 + c_{0,0} - 2c_{0,1} - c_{1,1} \geq 0,\\
1 - 2a_{0} + c_{0,0} \geq 0, &\quad 1 - a_{0} + a_{1} - c_{0,1} \geq 0, &\quad 2 - c_{0,0} + 2c_{0,1} + c_{1,1} \geq 0,\\
1 - 2a_{1} + c_{1,1} \geq 0, &\quad 1 + a_{0} - a_{1} - c_{0,1} \geq 0, &\quad 2 - c_{0,0} - 2c_{0,1} + c_{1,1} \geq 0,\\
1 - c_{0,0} \geq 0, \\
1 - c_{1,1} \geq 0.
\end{array}$$
Removing the facets that are already facets of the symmetric nonsignalling polytope leaves us with the the inequalities
$$\begin{array}{rcl}
2 &\geq c_{0,0} + 2c_{0,1} - c_{1,1} &\geq -2,\\
2 &\geq c_{0,0} - 2c_{0,1} - c_{1,1} &\geq -2.
\end{array}$$
Note that if we reduce the 8 CHSH inequalities (\ref{eqn:CHSH}) by $a_j = b_j$ and $c_{0,1} = c_{1,0}$ then four of these reduce precisely to the inequalities above. However the other four collapse to the two inequalities
$$\begin{array}{rcl}
2 &\geq c_{0,0} + c_{1,1} &\geq -2,
\end{array}$$
Neither of these are facets of either the symmetric nonsignaling polytope or the symmetric hidden variables polytope. Yet they are both supporting hyperplanes of the symmetric nonsignaling polytope: one is generated from the sum of $1 - c_{0,0} \geq 0$ and $1 - c_{1,1} \geq 0$ and the other from the sum of $1 + 2a_{0} + c_{0,0} \geq 0$, $1 - 2a_{0} + c_{0,0} \geq 0$, $1 + 2a_{1} + c_{1,1} \geq 0$, and $1 - 2a_{1} + c_{1,1} \geq 0$. Consequently these inequalities are trivially satisfied for the whole symmetric nonsignaling polytope.

The situation is similar for the synchronous nonsignaling polytope. Using Lemma \ref{lemma:synchonous-reduction} we reduce the facets of the nonsignaling polytope, now by $a_j = b_j$ and $c_{j,j} = 1$. Again separating out the cases $j=k$ and $j \not= k$ we obtain inequalities
$$\begin{array}{rr}
1 + a_j \geq 0,&\quad 1 + a_{j} + a_{k} + c_{j,k} \geq 0,\\
1 - a_j \geq 0,&\quad 1 - a_{j} - a_{k} + c_{j,k} \geq 0,\\
&\quad 1 - a_{j} + a_{k} - c_{j,k} \geq 0,\\
&\quad 1 + a_{j} - a_{k} - c_{j,k} \geq 0.\\
\end{array}$$
Yet, unlike what happens in the symmetric nonsignaling polytope, not all of these are facets. The inequality $1 + a_j \geq 0$ can be realized as the sum of $1 + a_{j} + a_{k} + c_{j,k} \geq 0$ and $1 + a_{j} - a_{k} - c_{j,k} \geq 0$ and similarly for $1 + a_j \geq 0$. The remaining inequalities are independent, so the hyperplane representation of the synchronous nonsignaling polytope is
\begin{align*}
    1 + a_{j} + a_{k} + c_{j,k} &\geq 0,\\
    1 - a_{j} - a_{k} + c_{j,k} &\geq 0,\\
    1 - a_{j} + a_{k} - c_{j,k} &\geq 0,\\
    1 + a_{j} - a_{k} - c_{j,k} &\geq 0,
\end{align*}
for $j\not= k$.

While the synchronous nonsignaling polytope above is 4-dimensional, the synchronous hidden variables polytope is in fact 3-dimensional. As every synchronous classical correlation is symmetric, the polyhedron must additionally satisfy $c_{0,1} = c_{1,0}$. The polytope is generated by four vertices, the four strategies where Alice and Bob both select a function $f:\{0,1\} \to \{0,1\}$. This is just a tetrahedron and so has 4 facets as well. These facets are 
\begin{align*}
    1 + a_{0} + a_{1} + c_{0,1} &\geq 0,\\
    1 + a_{0} - a_{1} - c_{0,1} &\geq 0,\\
    1 - a_{0} - a_{1} + c_{0,1} &\geq 0,\\
    1 - a_{0} + a_{1} - c_{0,1} &\geq 0.
\end{align*}
All of these are already facets of the synchronous nonsignaling polytope, so we see there are no analogues of the CHSH inequalities for synchronous correlations. This provides an alternate proof (in the restricted case $|Y|=2$) from that of the previous appendix.
\end{section}

\begin{section}{Additional notes about \texorpdfstring{$As, Bs$ and $Ss$}{Lg}}\label{appendix:notes}

Starting with the asymmetry terms $A_{x_A,x_B}$, we use the definition of $c_{j,k}$ and $\sum_{y_A,y_B} p(y_A,y_B\:|\:x_A,x_B) = 1$ to rewrite these as:
\begin{align}
    A_{0,1} &= (p(0,1\:|\:1,0) - p(1,0\:|\:0,1)) + (p(1,0\:|\:1,0) - p(0,1\:|\:0,1)), \label{eqn:A01}\\
    A_{0,2} &= (p(0,1\:|\:2,0) - p(1,0\:|\:0,2)) + (p(1,0\:|\:2,0) - p(0,1\:|\:0,2)),\label{eqn:A02}\\
    A_{1,2} &= (p(0,1\:|\:2,1) - p(1,0\:|\:1,2)) + (p(1,0\:|\:2,1) - p(0,1\:|\:1,2)).\label{eqn:A12}
\end{align}

The bias terms $B_x$ are somewhat harder to simplify into a good form. One immediately finds
$$B_x = b_x - a_x = 2(p(0,1\:|\:x,x) - p(1,0\:|\:x,x)),$$
however it is difficult to rationalize this as a sort of bias. The key is to recognize that the bias terms always come in pairs. So if we wish to estimate, say, $B_0 - B_1$ then we first use the nonsignaling conditions to write
\begin{align*}
    B_0 &= p(0,0\:|\:0,0) + p(0,1\:|\:0,0) - p(1,0\:|\:0,0) - p(1,1\:|\:0,0)\\
    &\quad - p(0,0\:|\:0,0) + p(0,1\:|\:0,0) - p(1,0\:|\:0,0) + p(1,1\:|\:0,0)\\
    &= p(0,0\:|\:0,1) + p(0,1\:|\:0,1) - p(1,0\:|\:0,1) - p(1,1\:|\:0,1)\\
    &\quad - p(0,0\:|\:1,0) + p(0,1\:|\:1,0) - p(1,0\:|\:1,0) + p(1,1\:|\:1,0).
\end{align*}
Similar we rewrite
\begin{align*}
    B_1 &= p(0,0\:|\:1,0) + p(0,1\:|\:1,0) - p(1,0\:|\:1,0) - p(1,1\:|\:1,0)\\
    &\quad - p(0,0\:|\:0,1) + p(0,1\:|\:0,1) - p(1,0\:|\:0,1) + p(1,1\:|\:0,1).
\end{align*}
Thus
\begin{equation}
    B_0 - B_1 = 2(p(0,0\:|\:0,1) - p(1,1\:|\:0,1)) - 2(p(0,0\:|\:1,0) - p(1,1\:|\:1,0)).\label{eqn:B0minusB1}
\end{equation}
Identically,
\begin{align}
    B_0 - B_2 &= 2(p(0,0\:|\:0,2) - p(1,1\:|\:0,2)) - 2(p(0,0\:|\:2,0) - p(1,1\:|\:2,0)),\label{eqn:B0minusB2}\\
    B_1 - B_2 &= 2(p(0,0\:|\:1,2) - p(1,1\:|\:1,2)) - 2(p(0,0\:|\:2,1) - p(1,1\:|\:2,1)).\label{eqn:B1minusB2}
\end{align}
Note that these three quantities are not independent as $(B_0 - B_1) - (B_0 - B_2) + (B_1 - B_2) = 0$.

Finally the asynchronicity terms $S_x = 1 - c_{x,x}$ are again simple to express in terms of the correlation:
\begin{align}
    S_0 &= 2(p(0,1\:|\:0,0) + p(1,0\:|\:0,0)),\\
    S_1 &= 2(p(0,1\:|\:1,1) + p(1,0\:|\:1,1)),\\
    S_2 &= 2(p(0,1\:|\:2,2) + p(1,0\:|\:2,2)).
\end{align}

\begin{proposition}
    Suppose $\max\{|A_{j,k}|, |B_j|, S_j\} \leq \epsilon$. Then no synchronous Bell violation $J_3 < \frac{\epsilon}{2}$ can be explained by a (asymmetric, biased, and asynchronous) hidden variables correlation. \label{proposition:asynchronicity-bias-S-J-bounds}
\end{proposition}

During a run of the protocol, Alice and Bob can directly estimate the asymmetry and bias in their correlation using equations $\ref{eqn:A01}-\ref{eqn:B1minusB2}$. They can then use Proposition \ref{proposition:asynchronicity-bias-S-J-bounds}, which follows directly from equations (\ref{eqn:hv-inequalities}), to detect interference by Eve, and subsequently abort the protocol if their statistics don't agree with the threshold given by the proposition. As for the asynchronicity measures, these terms involve the correlation components where Alice and Bob use the same inputs. In the protocol Alice and Bob do not share their outputs in this case, and hence the asynchronicity cannot be estimated directly. However, if the observed Bell violation ($J_3$) is high, the asynchronicity is low, and Alice and Bob can apply standard error correction, information reconciliation, and privacy amplification to obtain a shared secret key with desired security parameters.

\end{section}

\begin{section}{The symmetric nonsignalling polytope with  \texorpdfstring{$|X| = 3,\ |Y| = 2$}{Lg}} \label{appendix:symnspoly}

The $21$ facets of the symmetric nonsignalling polytope reduce to simple forms: $9$ are given by
\begin{equation*}
\begin{array}{rl@{\quad}rl@{\quad}rl}
S_0 &\geq 0,& -S_0 + U_0 + 2 \geq 0,& -S_0 - U_0 + 2 \geq 0,\\
S_1 &\geq 0,& -S_1 + U_1 + 2 \geq 0,& -S_1 - U_1 + 2 \geq 0,\\
S_2 &\geq 0,& -S_2 + U_2 + 2 \geq 0,& -S_2 - U_2 + 2 \geq 0,
\end{array}
\end{equation*}
and remaining $12$ are
\begin{equation*}
\begin{array}{rl@{\quad}rl@{\quad}rl@{\quad}rl}
4J_0 + 4J_1 + U_0 - U_1 &\geq 0&, 4J_0 + 4J_1 - U_0 + U_1 &\geq 0,&
4J_0 + 4J_2 + U_0 - U_2 &\geq 0&, 4J_0 + 4J_2 - U_0 + U_2 &\geq 0,\\
4J_0 + 4J_3 - U_1 - U_1 &\geq 0&, 4J_0 + 4J_3 + U_1 + U_2 &\geq 0,&
4J_1 + 4J_2 + U_1 - U_2 &\geq 0&, 4J_1 + 4J_2 - U_1 + U_2 &\geq 0,\\
4J_1 + 4J_3 - U_0 - U_2 &\geq 0&, 4J_1 + 4J_3 + U_0 + U_2 &\geq 0,&
4J_2 + 4J_3 - U_0 - U_1 &\geq 0&, 4J_2 + 4J_3 + U_0 + U_1 &\geq 0.
\end{array}
\end{equation*}

The symmetric hidden variables polytope has $105$ facets, however only the first nine of the facets of the symmetric nonsignalling polytope are inherited (the other twelve are derivable from those listed below). There are $96$ remaining facets which define symmetric Bell inequalities. Twenty-four of these were already mentioned in the text, twelve of which form the critical relations (\ref{eqn:symmetric-Bell-1}):
\begin{equation*}
\begin{array}{rl@{\quad}rl@{\quad}rl@{\quad}rl@{\quad}rl@{\quad}rl}
4J_0 + S_0 &\geq 0,& 4J_0 + S_1 &\geq 0,& 4J_0 + S_2 &\geq 0,&
-4J_0 - S_0 + 4&\geq 0,& -4J_0 - S_1 + 4&\geq 0,& -4J_0 - S_2 + 4&\geq 0,\\
4J_1 + S_0 &\geq 0,& 4J_1 + S_1 &\geq 0,& 4J_1 + S_2 &\geq 0,&
-4J_1 - S_0 + 4&\geq 0,& -4J_1 - S_1 + 4&\geq 0,& -4J_1 - S_2 + 4&\geq 0,\\
4J_2 + S_0 &\geq 0,& 4J_2 + S_1 &\geq 0,& 4J_2 + S_2 &\geq 0,&
-4J_2 - S_0 + 4&\geq 0,& -4J_2 - S_1 + 4&\geq 0,& -4J_2 - S_2 + 4&\geq 0,\\
4J_3 + S_0 &\geq 0,& 4J_3 + S_1 &\geq 0,& 4J_3 + S_2 &\geq 0,&
-4J_3 - S_0 + 4&\geq 0,& -4J_3 - S_1 + 4&\geq 0,& -4J_3 - S_2 + 4&\geq 0.
\end{array}
\end{equation*}
Twenty-four facets provide bounds on a single $J$ term:
\begin{equation*}
\begin{array}{rl@{\quad}rl}
-8J_0 - S_0 - S_1 - 2U_0 - 2U_1 + 8 &\geq 0,&
-8J_0 - S_0 - S_1 + 2U_0 + 2U_1 + 8 &\geq 0,\\
-8J_0 - S_0 - S_2 - 2U_0 - 2U_2 + 8 &\geq 0,&
-8J_0 - S_0 - S_2 + 2U_0 + 2U_2 + 8 &\geq 0,\\
-8J_0 - S_1 - S_2 - 2U_1 + 2U_2 + 8 &\geq 0,&
-8J_0 - S_1 - S_2 + 2U_1 - 2U_2 + 8 &\geq 0,\\
-8J_1 - S_0 - S_1 - 2U_0 - 2U_1 + 8 &\geq 0,&
-8J_1 - S_0 - S_2 + 2U_0 - 2U_2 + 8 &\geq 0,\\
-8J_1 - S_0 - S_2 - 2U_0 + 2U_2 + 8 &\geq 0,&
-8J_1 - S_0 - S_1 + 2U_0 + 2U_1 + 8 &\geq 0,\\
-8J_1 - S_1 - S_2 - 2U_1 - 2U_2 + 8 &\geq 0,&
-8J_1 - S_1 - S_2 + 2U_1 + 2U_2 + 8 &\geq 0,\\
-8J_2 - S_0 - S_1 - 2U_0 + 2U_1 + 8 &\geq 0,&
-8J_2 - S_0 - S_1 + 2U_0 - 2U_1 + 8 &\geq 0,\\
-8J_2 - S_0 - S_2 - 2U_0 - 2U_2 + 8 &\geq 0,&
-8J_2 - S_0 - S_2 + 2U_0 + 2U_2 + 8 &\geq 0,\\
-8J_2 - S_1 - S_2 - 2U_1 - 2U_2 + 8 &\geq 0,&
-8J_2 - S_1 - S_2 + 2U_1 + 2U_2 + 8 &\geq 0,\\
-8J_3 - S_0 - S_1 - 2U_0 + 2U_1 + 8 &\geq 0,&
-8J_3 - S_0 - S_1 + 2U_0 - 2U_1 + 8 &\geq 0,\\
-8J_3 - S_0 - S_2 - 2U_0 + 2U_2 + 8 &\geq 0,&
-8J_3 - S_0 - S_2 + 2U_0 - 2U_2 + 8 &\geq 0,\\
-8J_3 - S_1 - S_2 - 2U_1 + 2U_2 + 8 &\geq 0,&
-8J_3 - S_1 - S_2 + 2U_1 - 2U_2 + 8 &\geq 0.
\end{array}
\end{equation*}
Twenty-four facets involve sums of two $J$s, twelve of which appear as (\ref{eqn:symmetric-Bell-2}):
\begin{equation*}
\begin{array}{rlrl}
4J_0 + 4J_1 - S_0 + S_1 &\geq 0,& 2J_0 + 2J_1 - U_0 + U_1 &\geq 0\\
4J_0 + 4J_1 + S_0 - S_1 &\geq 0,& 2J_0 + 2J_1 + U_0 - U_1 &\geq 0\\
4J_0 + 4J_2 - S_0 + S_2 &\geq 0,& 2J_0 + 2J_2 - U_0 + U_2 &\geq 0\\
4J_0 + 4J_2 + S_0 - S_2 &\geq 0,& 2J_0 + 2J_2 + U_0 - U_2 &\geq 0\\
4J_0 + 4J_3 - S_1 + S_2 &\geq 0,& 2J_0 + 2J_3 - U_1 - U_2 &\geq 0\\
4J_0 + 4J_3 + S_1 - S_2 &\geq 0,& 2J_0 + 2J_3 + U_1 + U_2 &\geq 0\\
4J_1 + 4J_2 - S_1 + S_2 &\geq 0,& 2J_1 + 2J_2 - U_1 + U_2 &\geq 0\\
4J_1 + 4J_2 + S_1 - S_2 &\geq 0,& 2J_1 + 2J_2 + U_1 - U_2 &\geq 0\\
4J_1 + 4J_3 - S_0 + S_2 &\geq 0,& 2J_1 + 2J_3 - U_0 - U_2 &\geq 0\\
4J_1 + 4J_3 + S_0 - S_2 &\geq 0,& 2J_1 + 2J_3 + U_0 + U_2 &\geq 0\\
4J_2 + 4J_3 - S_0 + S_1 &\geq 0,& 2J_2 + 2J_3 - U_0 - U_1 &\geq 0\\
4J_2 + 4J_3 + S_0 - S_1 &\geq 0,& 2J_2 + 2J_3 + U_0 + U_1 &\geq 0.
\end{array}
\end{equation*}
The final twenty-four facets involve sums of three $J$s.
\begin{align*}
8J_0 + 2J_1 + 2J_2 - S_0 + S_1 + S_2 + 2U_0 - U_1 - U_2 &\geq 0\\
8J_0 + 2J_1 + 2J_2 - S_0 + S_1 + S_2 - 2U_0 + U_1 + U_2 &\geq 0\\
8J_0 + 2J_1 + 2J_3 + S_0 - S_1 + S_2 + U_0 - 2U_1 - U_2 &\geq 0\\
8J_0 + 2J_1 + 2J_3 + S_0 - S_1 + S_2 - U_0 + 2U_1 + U_2 &\geq 0\\
8J_0 + 2J_2 + 2J_3 + S_0 + S_1 - S_2 + U_0 - U_1 - 2U_2 &\geq 0\\
8J_0 + 2J_2 + 2J_3 + S_0 + S_1 - S_2 - U_0 + U_1 + 2U_2 &\geq 0\\
2J_0 + 8J_1 + 2J_2 + S_0 - S_1 + S_2 - U_0 + 2U_1 - U_2 &\geq 0\\
2J_0 + 8J_1 + 2J_2 + S_0 - S_1 + S_2 + U_0 - 2U_1 + U_2 &\geq 0\\
2J_0 + 8J_1 + 2J_3 - S_0 + S_1 + S_2 + 2U_0 - U_1 + U_2 &\geq 0\\
2J_0 + 8J_1 + 2J_3 - S_0 + S_1 + S_2 - 2U_0 + U_1 - U_2 &\geq 0\\
8J_1 + 2J_2 + 2J_3 + S_0 + S_1 - S_2 - U_0 + U_1 - 2U_2 &\geq 0\\
8J_1 + 2J_2 + 2J_3 + S_0 + S_1 - S_2 + U_0 - U_1 + 2U_2 &\geq 0\\
2J_0 + 2J_1 + 8J_2 + S_0 + S_1 - S_2 + U_0 + U_1 - 2U_2 &\geq 0\\
2J_0 + 2J_1 + 8J_2 + S_0 + S_1 - S_2 - U_0 - U_1 + 2U_2 &\geq 0\\
2J_0 + 8J_2 + 2J_3 - S_0 + S_1 + S_2 + 2U_0 + U_1 - U_2 &\geq 0\\
2J_0 + 8J_2 + 2J_3 - S_0 + S_1 + S_2 - 2U_0 - U_1 + U_2 &\geq 0\\
2J_1 + 8J_2 + 2J_3 + S_0 - S_1 + S_2 - U_0 - 2U_1 + U_2 &\geq 0\\
2J_1 + 8J_2 + 2J_3 + S_0 - S_1 + S_2 + U_0 + 2U_1 - U_2 &\geq 0\\
2J_0 + 2J_1 + 8J_3 + S_0 + S_1 - S_2 - U_0 - U_1 - 2U_2 &\geq 0\\
2J_0 + 2J_1 + 8J_3 + S_0 + S_1 - S_2 + U_0 + U_1 + 2U_2 &\geq 0\\
2J_0 + 2J_2 + 8J_3 + S_0 - S_1 + S_2 + U_0 + 2U_1 + U_2 &\geq 0\\
2J_0 + 2J_2 + 8J_3 + S_0 - S_1 + S_2 - U_0 - 2U_1 - U_2 &\geq 0\\
2J_1 + 2J_2 + 8J_3 - S_0 + S_1 + S_2 + 2U_0 + U_1 + U_2 &\geq 0\\
2J_1 + 2J_2 + 8J_3 - S_0 + S_1 + S_2 - 2U_0 - U_1 - U_2 &\geq 0.
\end{align*}

\end{section}

\end{document}